\documentclass{amsart}

\usepackage{amsfonts}
\usepackage{mathrsfs}
\usepackage{amsmath,amssymb,amsthm,amsfonts}
\usepackage{cite}
\usepackage{caption}
\usepackage{pgfplots}
\usepackage{enumerate}
\usepackage{subfig}
\pgfplotsset{plot coordinates/math parser=false}
\newlength\figureheight
\newlength\figurewidth

\usepackage{booktabs, threeparttable, stackengine}

\usepackage{tikz}

\RequirePackage[OT1]{fontenc}
\RequirePackage[colorlinks,citecolor=blue,urlcolor=blue]{hyperref}

\DeclareMathOperator*{\vecc}{vec}

\newcommand{\overbar}[1]{\mkern 1.5mu\overline{\mkern-1.5mu#1\mkern-1.5mu}\mkern 1.5mu}

\numberwithin{equation}{section}
\theoremstyle{plain}
\newtheorem{theorem}{Theorem}[section]
\newtheorem{lemma}{Lemma}[section]

\newtheorem{prop}{Proposition}[section]
\theoremstyle{definition}

\theoremstyle{remark}
\newtheorem{remark}{Remark}[section]
\newtheorem{example}{Example}

\theoremstyle{remark}
\newtheorem{simt}{Simulation}
\allowdisplaybreaks

\newcommand{\floor}[1]{\lfloor #1 \rfloor}

\newcommand{\norm}[1]{\left\lVert#1\right\rVert}
\newcommand{\Log}[2][]{{\mathrm Log}_{#1}{\left(#2\right)}}
\newcommand{\emet}[2]{\langle #1,#2 \rangle}
\newcommand{\Exp}[2][]{{\mathrm Exp}_{#1}{\left(#2\right)}}

\pdfoptionpdfminorversion=7
\pdfoutput=1


\pgfplotsset{compat=1.14}
\begin{document}

\title{Robust   optimization  and inference on manifolds}

\author[Lin et al.]{ Lizhen Lin, Drew Lazar, Bayan Sarpabayeva, and David B. Dunson}

\address{Department of Applied and Computational Mathematics and Statistics, The University of Notre Dame,  Notre Dame, IN. }

\email{lizhen.lin@nd.edu}


\begin{abstract}
We propose a robust and scalable procedure for general optimization and inference problems  on manifolds  leveraging the classical idea of `median-of-means' estimation. This  is motivated by ubiquitous examples and applications in modern data science in which a  statistical  learning problem can be cast as an optimization problem over manifolds. Being able to incorporate the underlying geometry for inference while addressing the need for robustness and scalability presents great challenges. We address these challenges by first proving a key lemma that characterizes some crucial properties of geometric medians  on manifolds. In turn, this allows us to prove  robustness and tighter concentration of our proposed  final estimator in a subsequent theorem.  This estimator aggregates a collection of subset estimators by taking their geometric median over the manifold. We illustrate bounds on this estimator via calculations in explicit examples.  The robustness and scalability of the procedure is illustrated in numerical examples on both simulated and real data sets.

\end{abstract}

\maketitle



\vspace*{.3in}

\noindent\textsc{Keywords}: {Geometric median on manifolds; Median-of-means; Optimization on manifolds; Robust inference;  Robust principal geodesic analysis (RPGA); Scalability}

\newpage


\section{Introduction}

There is a  rapidly growing collection of learning problems and applications in data science that can be formalized as  optimization problems over non-Euclidean spaces, such as non-linear Riemannian manifolds. Advancement in technology and computing leads to the increasing prevalence of complex data that are in non-Euclidean forms, such as positive definite matrices (diffusion matrices) in diffusion tensor imaging~\cite{dti-ref}, shape objects in medical vision~\cite{kendall}, network data objects  \cite{paperwitqeric}, subspaces or orthonormal  frames and so on~\cite{sinicapaper}. Proper statistical inference from such data involves optimization over the underlying manifold to which the data are constrained. For example, there is a vibrant line of research based on estimation of Fr\'echet means~\cite{frechet},  which are minimizers of Fr\'echet functions on manifolds~\cite{rabibook, linclt}. In this case, both the data and parameters of interest are on manifolds. In addition, it is common to represent lower-dimensional structure in high-dimensional data as a manifold.  Learning such a manifold is a non-trivial optimization problem.  In each of the above problems, developing algorithms that are robust to data contamination and heavy tails and that scale efficiently to large datasets is crucial.


With this motivation, our main aim is to propose a robust and scalable procedure for general optimization on manifolds. We generalize the powerful `median-of-means' estimator~\cite{Nemirovski1983Problem-complex00},  to manifolds by establishing some key properties of the geometric median on manifolds with which we can prove tighter concentration bounds of our proposed estimator. The key idea is to obtain optimizers from subset data which are aggregated to form a final estimator. Our estimator can be shown to be robust to outliers and contaminations of arbitrary nature and has provable robustness. Scalability of the algorithm is automatically gained via the divide-and-conquer nature of combining subset-based estimators.

There is a related literature outside of the non-Euclidean manifold setting.  For example, \cite{minsker2015} applies the `median-of-means' procedure for robust estimation in Banach spaces.  In~\cite{median-posterior}, a robust Bayesian estimator is proposed as the geometric median of subset posteriors measures. There has been recent  theoretical and computational developments on applying the median of mean procedure in learning theory~\cite{lecu2020, lugosi2019}.  Characterizing properties of the geometric median on manifolds requires a substantially different approach, which deals with the underlying geometry.  We prove a key lemma characterizing the robustness property of geometric medians on manifolds, which allows us to show our estimator has tighter concentration bounds than subset estimators.  This is done for both the \emph{extrinsic geometric median} and the \emph{intrinsic geometric median} with the former employing an embedding of manifolds into some higher-dimensional Euclidean space and the later adopting a Riemannian structure. 
We illustrate the bounds with explicit calculations in both the extrinsic and intrinsic cases. Our procedure is demonstrated in a class of manifolds through both simulated and real data examples. Manifolds considered include the sphere, positive definite matrices and the planar shape spaces, all of which are commonly applicable in real data analyses.

The paper is organized as follows: in section 2 we  introduce the general procedure and prove a key property of the geometric median on manifolds.  Section 3 is devoted to robust estimation and optimization on manifolds.  In particular, we prove the concentration property of our final estimator in estimating the population parameter of interest and provide examples of calculations of the bounds.  In section 4 we consider an extensive simulation study and data analysis illustrating both the robustness and scalability of our procedure. The papers ends with a discussion.

\section{Geometric median and robust estimation on manifolds }


Let $Q$ be a probability distribution on some space $\mathcal X$ and $\mathcal M$ be a manifold.  We consider the problem of estimating the {\it population parameter}
\begin{equation}
\label{eqfrechet}
\mu=\arg\min_{p\in\mathcal M} L^*(p),
\end{equation}
where $L^*(p)$ is defined as
\begin{align*}
L^*(p)= \int_{\mathcal X} L(p, x)Q(dx)
\end{align*}
for some loss function $ L$.

Let $\boldsymbol x=\{x_1,\ldots, x_n\}$ where $x_1,\ldots, x_n$ are sampled from $Q$.  The parameter $\mu$ is often estimated by the {\it empirical risk estimator}
\begin{align} \label{eq:emprisk}
\hat{\mu}_n=\arg\min_{p\in\mathcal{M}}L_n(p,\boldsymbol x)=\arg\min_{p\in\mathcal{M}}\frac{1}{n} \sum_{i=1}^nL(p,x_i).
\end{align}

\begin{remark}
An important example is the {\it Fr\'echet mean} in which the risk function is 
\begin{align*}
L^*(p)=\int \rho^2(p, x)Q(dx),
\end{align*}
with $Q$ supported on a manifold $\mathcal X = \mathcal M$ and $\rho$ a metric defined on $\mathcal M$, and $\hat{\mu}_n$ corresponds to the sample Fr\'echet mean.  There is significant literature on nonparametric statistical inference on manifolds in which estimation of the Fr\'echet mean is addressed (see \cite{rabibook, linclt}). Similarly, in a regression problem with manifold-valued output, the underlying problem can be cast as an optimization problem on manifolds~\cite{linregression}.
In many other applications, we do not have $\mathcal{X} = \mathcal{M}$ with $\mathcal X$ a higher-dimensional ambient space and optimization done over a lower-dimensional manifold such as the Grassmannian~\cite{Lohit2017LearningIR,Saparbayeva2018CommunicationEP}, which has abundant applications in  manifold learning and low-rank estimation matrix problems~\cite{lowrank, BOUMAL2015}.
\end{remark}
Real data sets often contain outliers that can be errors, extreme observations or contamination of various sorts which occur when sampling from heavy tailed or mixture distributions. Thus, there is interest in robust estimation of population parameters by estimators which are stable and not unduly effected by the presence of outliers.

In this paper, we consider the classic and intuitive estimator formed by taking the geometric median of a collection of subset estimators or optimizers. Before formally introducing our procedure in the next section, we introduce the notion of the {\it geometric median on a manifold} and prove an important lemma about its properties.


For a metric space $(\mathcal{M}, \rho)$ the {\it geometric median}, $p^{*}$, of points $p_1, \ldots, p_m\in\mathcal{M}$ minimizes the sum of distances to the points, i.e.,
\begin{equation} \label{geo_med}
    p^{*}={\rm med}(p_1, \ldots, p_m)=\arg\min_{p\in\mathcal{M}}\frac{1}{m}\sum_{k=1}^m\rho(p, p_k)
    \end{equation}
assuming that $ p^{*}$ exists and is unique.
When $\mathcal M$ is a manifold, there are different ways to metrize the space. Let $J: \mathcal{M}\rightarrow \mathbb{R}^D$ be an embedding of a manifold $\mathcal{M}$ into some higher-dimensional Euclidean space $\mathbb{R}^D.$   One can define an {\it extrinsic distance} on $M$ induced from the embedding $J$ in which
\begin{align*}
\rho(p,q)=\|J(p)-J(q)\|,
\end{align*}
where $\|\cdot\|$ is the Euclidean norm on $\mathbb R^D$.
Alternatively,  one can take $\rho$ to be the {\it intrinsic distance} as the geodesic distance arising from a Riemannian structure on $\mathcal{M}$.

With the choice of $\rho$ as the extrinsic or intrinsic distance in \eqref{geo_med},  we have corresponding definitions of the \emph{extrinsic geometric median} and  the \emph{intrinsic geometric median}, respectively. Some properties of the intrinsic geometric median are studied in~\cite{FletcherVJ08} by, for example,  characterizing the uniqueness  conditions of the intrinsic sample median along with a Weizfeld algorithm for finding the median. Our theoretical results below on robustness are of a fundamentally different nature, allowing us to construct an estimator that is not only robust but also has tighter bounds around the true parameter of interest.


We prove the following lemma, which says if $\omega \in \mathcal M$  is at least a constant, $C_\alpha$, times $\epsilon$ distance away from the geometric median $p^{*}={\rm med}(p_1, \ldots, p_m),$ then $\omega$  is at least $\epsilon$ distance away from at least an $\alpha$ fraction of the points $p_1, \ldots, p_m$.  This result is illustrated in Figure \ref{fig1}.  A similar result was proved in
\cite{minsker2015} 
in the case of Banach spaces. The proof of the following, a general lemma for manifolds, requires additional machinery.

\begin{figure}[ht!]
\begin{center}
\begin{tikzpicture} [scale=1.5]

    \draw[smooth cycle,tension=.7] plot coordinates{(.45,0) (0.95,1.7) (2.2,1.7) (4.2,2.3) (4.7,0)};
    \coordinate (A) at (1,1);
    \draw (A) arc(140:40:.5) (A) arc(-140:-20:.5) (A) arc(-140:-160:.5);

\coordinate (p1) at (2.3,1.46);
\coordinate (p2) at (4.3,0.464);

\draw[-,domain=1:1.4, blue, variable=\t, samples=200] plot({\t+.9},{\t^2-.5});
\draw[-,domain=-.1:.2, blue, variable=\t, samples=200] plot({8*\t+2.7},{-4*(\t^3)+.496});
\draw [blue] (p1) edge[bend left=7,looseness=1] (p2);
\node at (4.35,0.33) {\footnotesize $p^*$};
\node at (1.9,0.35) {\footnotesize$\omega$};
\node at (2.4,1.6) {\footnotesize$p_j$};
\draw [->](3,.2) -- (2.9,.45);
\node at (3,.1) {\footnotesize$\rho(\omega,p^*)\geq C_{\alpha}\epsilon$};
\draw[fill,blue] (4.3,0.464) circle [radius=1pt];
\draw[fill,green] (1.9,0.5) circle [radius=1pt];
\draw[fill,red] (2.3,1.46) circle [radius=1pt];
\draw [->](1.8,1.35) -- (2.15,1.15);
\node at (1.27,1.37) {\footnotesize$\rho(w,p_j)\geq \epsilon$};
\node at (3.35,1.67) {\small $\mathcal{M}$};
\end{tikzpicture}
  \caption{Geometric Illustration of Lemma \ref{leml1} on Manifold $\,\mathcal{M}$}
  \label{fig1}

  \end{center}
\end{figure}
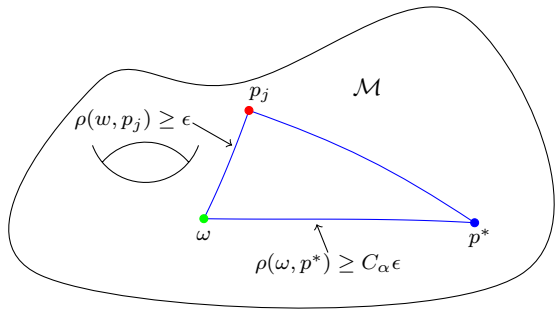

\begin{lemma}
\label{leml1}
Let $p_1, \ldots, p_m \in \mathcal M$, $p^{*}={\rm med}(p_1, \ldots, p_m)$ as in \eqref{geo_med}. Then (a) and (b) below hold.
\begin{itemize}
\item[(a)]  Let $\rho$ be the extrinsic distance for some embedding $J: \mathcal{M}\rightarrow \tilde{\mathcal{M}}\subset \mathbb R^D$. Let $\omega\in\mathcal{M}$, $\psi$ be angle between $J(\omega)-J(p^*)$ and the tangent space $T_{J(p^*)}\tilde{\mathcal{M}}$ and let
    \[
    C_{\alpha}=\dfrac{1-\alpha}{\sqrt{1-2\alpha}\cos\psi-\alpha\sin\psi}
    \]
     where  $\alpha\in(0, \cot\psi\tan\frac{\psi}{2}).$ If
 $\rho(\omega, p^*)\geq C_{\alpha}\epsilon,$ then there exists an $\alpha$ portion of elements of $p_1, \ldots, p_m$ which are at least $\epsilon$ distance away from $\omega$. That is, there exists an index set $T\subset \{1,\ldots, m\}$  with $|T|\geq \alpha m$, and $\rho(p_j, \omega)\geq \epsilon$ for any $j\in T$.

 \item[(b)] Let  $\rho$  be an intrinsic distance  on $M$ with respect to some Riemannian structure. Let $\omega\in\mathcal{M}$, the $\log$ map $\log_{p^*}$\,be $K$-Lipschitz continuous from $B(\omega, \epsilon)$ to $T_{p^*}\mathcal{M}$ and let
     \[
     C_{\alpha}= K(1-\alpha)\sqrt{\frac{1}{1-2\alpha}}
     \]
     where $\alpha\in(0, 1/2)$. If
 $\rho(\omega, p^*)\geq C_{\alpha}\epsilon,$
 then there exists an $\alpha$ portion of elements of $p_1, \ldots, p_m$ which are at least $\epsilon$ distance away from $\omega$.
 \end{itemize}
\end{lemma}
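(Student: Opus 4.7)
The plan is to prove both parts by contrapositive: assume fewer than $\alpha m$ of the $p_k$'s satisfy $\rho(p_k,\omega) \geq \epsilon$, so the set $S = \{k : \rho(p_k, \omega) < \epsilon\}$ has cardinality $|S| > (1-\alpha)m$, and deduce the upper bound $\rho(\omega, p^*) < C_\alpha \epsilon$.

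For part (a), I would use the Riemannian first-order optimality of $J(p^*)$ as a constrained minimizer on $\tilde{\mathcal M} \subset \mathbb R^D$. Writing $g_k = (J(p^*)-J(p_k))/\|J(p^*)-J(p_k)\|$, this optimality says the ambient vector $\sum_k g_k$ is normal to the tangent space $T_{J(p^*)}\tilde{\mathcal M}$; equivalently, $\sum_k \langle g_k, e\rangle = 0$ for every tangent vector $e$. Decompose $J(\omega) - J(p^*) = R\cos\psi\, e_T + R\sin\psi\, e_N$, with $R = \rho(\omega, p^*)$, $e_T$ a unit tangent and $e_N$ a unit normal, and test against this particular $e_T$. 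The sharp per-point estimate is that for $k \in S$ the unit vector $g_k$ is confined to a spherical cap of half-angle $\arcsin(\epsilon/R)$ around $-(J(\omega)-J(p^*))/R$; maximizing $\langle g_k, e_T\rangle$ over the cap yields the tight bound $\langle g_k, e_T\rangle \leq -\sqrt{1-\epsilon^2/R^2}\cos\psi + (\epsilon/R)\sin\psi$, whereas for $k \notin S$ only the trivial $\langle g_k, e_T\rangle \leq 1$ is used. Feeding both into $\sum_k \langle g_k, e_T\rangle = 0$ with $|S| > (1-\alpha)m$ gives the scalar inequality
\[(1-\alpha)\!\left[\sqrt{1 - \epsilon^2/R^2}\,\cos\psi - (\epsilon/R)\sin\psi\right] \leq \alpha,\]
and squaring turns this into a quadratic in $\epsilon/R$; its positive root (positivity being ensured by $\alpha < \cot\psi\tan(\psi/2)$) is exactly $(\sqrt{1-2\alpha}\cos\psi - \alpha\sin\psi)/(1-\alpha)$, so $R \leq C_\alpha\epsilon$ with the stated $C_\alpha$.

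For part (b), I would reduce to the Hilbert-space case via the $\log$ chart at $p^*$. The Riemannian gradient condition for $p^*$ to minimize $\sum_k \rho(\cdot, p_k)$ is $\sum_k \log_{p^*}(p_k)/\|\log_{p^*}(p_k)\| = 0$ in $T_{p^*}\mathcal{M}$, which is precisely the statement that $0$ is the geometric median of the tangent vectors $y_k := \log_{p^*}(p_k)$ in the Hilbert space $T_{p^*}\mathcal{M}$. Applying the $\psi = 0$ instance of part (a) to these $y_k$ with $w := \log_{p^*}(\omega)$: whenever $\|w\| \geq (1-\alpha)\epsilon'/\sqrt{1-2\alpha}$, at least $\alpha m$ of the $y_k$ satisfy $\|y_k - w\| \geq \epsilon'$. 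The $K$-Lipschitz assumption on $\log_{p^*}$ over $B(\omega, \epsilon)$ gives, by contraposition, that $\|y_k - w\| > K\epsilon$ forces $\rho(p_k, \omega) \geq \epsilon$; setting $\epsilon' = K\epsilon$ and using the isometric identity $\|w\| = \rho(\omega, p^*)$ yields the stated $C_\alpha = K(1-\alpha)\sqrt{1/(1-2\alpha)}$.

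I expect the delicate step to be the sharp spherical-cap bound on $\langle g_k, e_T\rangle$ in part (a): a naive triangle-inequality argument gives only $\langle g_k, e_T\rangle \leq -(R\cos\psi - \epsilon)/(R+\epsilon)$, which propagates to the weaker constant $1/[(1-\alpha)\cos\psi - \alpha]$. Extracting the Minsker-style $\sqrt{1-2\alpha}$ requires genuinely using that $g_k$ is a unit vector (not merely that its direction deviates by $O(\epsilon/R)$) and identifying the exact cap; once that is in hand, the final quadratic calculation is mechanical.
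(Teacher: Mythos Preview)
Your proposal is correct and, for part~(a), essentially identical to the paper's argument: both use the first-order optimality of $J(p^*)$ along the tangential direction $e_T$, bound the angle between $J(p_j)-J(p^*)$ and $J(\omega)-J(p^*)$ by $\arcsin(\epsilon/R)$ for the close points, add the angle $\psi$ to pass to $e_T$, and solve the resulting scalar inequality. The only cosmetic difference is that the paper writes the optimality as the one-sided directional-derivative inequality $dL_{J(p^*)}(v)\ge 0$, which automatically absorbs the non-smooth case $p_k=p^*$ via an extra $I(p_j=p^*)$ term; your equality $\sum_k\langle g_k,e_T\rangle=0$ tacitly assumes no $p_k$ coincides with $p^*$, but since any such $p_k$ lies outside $B(\omega,\epsilon)$ (as $C_\alpha>1$) it falls under your trivial bound and the fix is immediate.

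For part~(b) you take a genuinely cleaner route than the paper. The paper recomputes the Riemannian directional derivative from scratch (via variation of geodesics, the identity $\frac{D}{dt}\gamma_{js}=\frac{D}{ds}\gamma_{jt}$, and Gauss's lemma) to arrive at $dL_{p^*}(v)/\|v\|=-\sum_{j}\cos(\widehat{v,v_j})+\sum_j I(p_j=p^*)$, and then runs the same cap argument intrinsically. You instead observe that this optimality condition is \emph{exactly} the statement that $0$ is the Euclidean geometric median of the logged points $y_k=\log_{p^*}p_k$ in the Hilbert space $T_{p^*}\mathcal M$, and then invoke the $\psi=0$ case of part~(a) with $\epsilon'=K\epsilon$. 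This buys you a one-line proof once (a) is in hand and makes transparent why the intrinsic constant is just the flat constant $(1-\alpha)/\sqrt{1-2\alpha}$ multiplied by the Lipschitz factor~$K$; the paper's longer calculation obscures this structure but has the advantage of being self-contained and of displaying the derivative formula explicitly.
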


\begin{proof}
(a)
Let $L(J(p))=\sum_{j=1}^m\rho(p, p_j)=\sum_{j=1}^m \|J(p)-J(p_j)\|$ for $J(p) \in \tilde{\mathcal{M}}$. Let $\gamma(t)$ be a  curve from $J(p^*)$ to $J(\omega)$ on $\tilde{\mathcal{M}}$, where $\gamma(0)=J(p^*)$,
$\gamma(1)=J(\omega)$, and $\gamma'(0)=v$. The directional derivative of $L$ at $J(p^*)$ evaluated at $v$ is given by
\begin{equation} \label{Datpstar}
\begin{aligned}
dL_{J(p^*)}(v)&=\lim_{t \rightarrow 0^+}\dfrac{L\left(\gamma(t)\right)- L\left(\gamma(0)\right)}{t} \\
&= \lim_{t\rightarrow 0^+}\dfrac{L\left(\gamma(t)\right)-L(J(p^*))}{t}\geq 0
\end{aligned}
\end{equation}
with the above inequality holding as $J(p^*)$ minimizes $L$ for $p\in\mathcal{M}.$
Let
\begin{equation*}
 \gamma(t)=\mathcal{P}_{\tilde{\mathcal{M}}} \Big(J(p^*)+t\big(J(\omega)-J(p^*)\big)\Big),
\end{equation*}
 where $\mathcal{P}$ is the projection of $\mathbb{R}^D$ onto $\tilde{\mathcal M}$, that is,
\begin{equation*}
    \mathcal{P}(x)=\arg\min_{y\in\tilde{\mathcal{M}}}\rho(y, x).
\end{equation*}
 We assume the projection map $\mathcal{P}$ is differentiable at $t=0$.  Denote $ \mathcal J$ as the Jacobian matrix of the projection map $\mathcal P$ at $J(p^*)$. Then one has
\begin{align*}
v=\gamma'(0)=\mathcal J \big(J(\omega)-J(p^*)\big),
\end{align*}
which will be needed in determining the constant $C_{\alpha}.$
One can  see that
\begin{align*}
L\left(\gamma(t)\right)-L(J(p^*))=\sum_{j=1}^m \left(\|\gamma(t)-J(p_j)\|-\|\gamma(0)-J(p_j)\| \right).
\end{align*}
Let
\begin{align*}
A_j=\dfrac{\|\gamma(t)-J(p_j)\|-\|\gamma(0)-J(p_j)\| }{t} \, \, \text{for } j=1, \ldots, m.
\end{align*}
Then
\begin{align*}
A_j=\dfrac{\|\gamma(t)-J(p_j)\|^2-\|\gamma(0)-J(p_j)\|^2 }{t\left(\|\gamma(t)-J(p_j)\|+\|\gamma(0)-J(p_j)\|\right)} \, \, \text{for } j=1, \ldots, m.
\end{align*}
One has
\begin{align}
\label{eqdeno}
\lim_{t\rightarrow 0^+}\left(\|\gamma(t)-J(p_j)\|+\|\gamma(0)-J(p_j)\|\right)=2\|\gamma(0)-J(p_j)\|.
\end{align}
Also,
\begin{align*}
\|\gamma(t)-J(p_j)\|^2&=\langle \gamma(t)-J(p_j), \gamma(t)-J(p_j)\rangle\\
&=\langle\gamma(t),\gamma(t)\rangle -2\langle\gamma(t), J(p_j)\rangle+\langle J(p_j), J(p_j)\rangle,
\end{align*}
and
\begin{align*}
\|\gamma(0)-J(p_j)\|^2=\langle\gamma(0), \gamma(0)\rangle-2\langle\gamma(0), J(p_j)\rangle+ \langle J(p_j), J(p_j)\rangle.
\end{align*}
Then
\begin{align*}
&\|\gamma(t)-J(p_j)\|^2-\|\gamma(0)-J(p_j)\|^2\\
 \quad  \quad &=\langle\gamma(t), \gamma(t)\rangle -\langle\gamma(0), \gamma(0)\rangle-2\langle\gamma(t)-\gamma(0), J(p_j)\rangle \\
 &=\left ( \langle\gamma(t), \gamma(t)\rangle-\langle\gamma(0), \gamma(t)\rangle\right)+\left(\langle\gamma(0), \gamma(t)\rangle-\langle\gamma(0), \gamma(0)\rangle\right)\\
 &\qquad-2\langle\gamma(t)-\gamma(0), J(p_j)\rangle\\
 &=\langle\gamma(t)-\gamma(0), \gamma(t) \rangle+\langle \gamma(0), \gamma(t)-\gamma(0)\rangle-2\langle\gamma(t)-\gamma(0), J(p_j)\rangle\\
 &=\langle\gamma(t)-\gamma(0), \gamma(t)+\gamma(0)-2J(p_j)\rangle.
\end{align*}
Therefore,
\begin{align*}
\lim_{t\rightarrow 0^+}\frac{\|\gamma(t)-J(p_j)\|^2-\|\gamma(0)-J(p_j)\|^2}{t}&=\lim_{t\rightarrow 0^+}\left\langle\frac{\gamma(t)-\gamma(0)}{t}, \gamma(t)+\gamma(0)-2J(p_j)\right\rangle\\
&=\langle\gamma'(0), \gamma(0)+\gamma(0)-2J(p_j)\rangle\\
&=2\langle\gamma'(0),\gamma(0)-J(p_j)\rangle = 2\langle\gamma'(0), J(p^*)-J(p_j)\rangle.
\end{align*}
Thus, by \eqref{eqdeno} and the above equation, if $J(p_j)\neq J(p^*)$, one has
\begin{align*}
\lim_{t\rightarrow 0^+}A_j=\dfrac{\langle\gamma'(0), J(p^*)-J(p_j)\rangle}{\|J(p^*)-J(p_j)\|}.
\end{align*}
Otherwise, if $J(p_j)=J(p^*),$ then
\begin{align*}
\lim_{t\rightarrow 0^+}A_j=\lim_{t\rightarrow 0^+}\dfrac{\|\gamma(t)-J(p_j)\|}{t}=\|\gamma'(0)\|.
\end{align*}
Therefore,
\begin{align*}
dL_{J(p^*)}(v) &=\sum_{j=1}^m\lim_{t\rightarrow 0^+}A_j\\
&=\sum_{j: p_j\neq p^*}\frac{\langle\gamma'(0), J(p^*)-J(p_j)\rangle}{\|J(p^*)-J(p_j)\|}+\|\gamma'(0)\|\sum_{j=1}^m I(p_j=p^*),
\end{align*}
where $I(\cdot)$ is the indicator function.
The above implies
\begin{align}
\label{23}
\frac{dL_{p^*}(v)}{\|\gamma'(0)\|}&=\sum_{j=1}^m\lim_{t\rightarrow 0^+}\frac{A_j}{\|\gamma'(0)\|}\\
&=\sum_{j: p_j\neq p^*}\frac{\langle\gamma'(0), J(p^*)-J(p_j)\rangle}{\|\gamma'(0)\|\|J(p^*)-J(p_j)\|}+\sum_{j=1}^m I(p_j=p^*).
\end{align}
The Jacobian matrix of the projection map $\mathcal P$ at $J(p^*)$, $ \mathcal J$, is the orthogonal projection of $T_{J(p^*)}\mathbb{R}^D \equiv \mathbb{R}^D $ to $T_{J(p^*)}\tilde{\mathcal{M}}$.
That is, for $a \in T_{J(p^*)}\mathbb{R}^D $,
  $\mathcal J(a)=a_1,$
where $a = a_1 + a_2$ is the unique orthogonal decomposition of $a$ with $a_1 \in T_{J(p^*)}\tilde{\mathcal{M}} $.
Now assume that there does \textit{not} exist an $\alpha$ portion of elements of $p_1,\ldots, p_m$ which are at least $\epsilon$ distance away from $\omega$, that is, without loss of generality,
\[
\|J(p_j)-J(\omega)\|\leq \epsilon \, \, \text{for } j=1,\ldots, \floor{(1-\alpha)m}+1.
\] Let us denote by $\angle \big(J(\omega)-J(p*), J(p_j)-J(p^*)\big)$ the angle between the vectors $J(\omega)-J(p^*)$ and $J(p_j)-J(p^*).$
Then for $j=1,\ldots, \floor{(1-\alpha)m}+1,$
\begin{equation*}
  \sin\Big(\angle\big(J(\omega)-J(p^*), J(p_j)-J(p^*)\big)\Big)<\frac{1}{C_{\alpha}}
\end{equation*}
and so
\begin{equation*}
  \cos\Big(\angle\big(J(\omega)-J(p^*), J(p_j)-J(p^*)\big)\Big)>\sqrt{1-\frac{1}{C_{\alpha}^2}}.
\end{equation*}
Notice that
\begin{multline*}
  \angle\Big(\mathcal{J}\big(J(\omega)\big)-J(p^*), J(\omega)-J(p^*)\Big)+\angle\big(J(\omega)-J(p^*), J(p_j)-J(p^*)\big)\\
  = \psi+\angle\big( J(\omega)-J(p^*), J(p_j)-J(p^*)\big) \geq \ \angle\Big(\mathcal{J}\big(J(\omega)\big)-J(p^*), J(p_j)-J(p^*)\Big).
\end{multline*}
Therefore,
\begin{multline*}
  \cos\bigg(\angle\Big(\mathcal{J}\big(J(\omega)\big)-J(p^*), J(p_j)-J(p^*)\Big)\bigg)\geq \cos\Big(\psi+\angle\big(J(\omega)-J(p^*), J(p_j)-J(p^*)\big)\Big)\\
  >\sqrt{1-\frac{1}{C_{\alpha}^2}}\cos\psi-\frac{1}{C_{\alpha}}\sin\psi.
\end{multline*}
We have
\begin{align*}
\begin{aligned}
  \frac{\langle\gamma'(0), p_j-J(p^*)\rangle}{\|\gamma'(0)\|\|p_j-J(p^*)\|}&=\cos\bigg(\angle\Big(\mathcal{J}\big(J(\omega)\big)-J(p^*), J(p_j)-J(p^*)\Big)\bigg)\\ &>\sqrt{1-\frac{1}{C_{\alpha}^2}}\cos\psi-\frac{1}{C_{\alpha}}\sin\psi.
\end{aligned}
\end{align*}

Then for any $\alpha\in\Big(0, \cot\psi\tan\frac{\psi}{2}\Big)$ from \eqref{23}
\begin{equation*}
  \dfrac{dL_{J(p^*)}(v)}{\|\gamma'(0)\|}<-(1-\alpha)m\bigg(\sqrt{1-\frac{1}{C_{\alpha}^2}}\cos\psi-\frac{1}{C_{\alpha}}\sin\psi\bigg)+\alpha m\leq0,
\end{equation*}
when
\begin{equation*}
 C_{\alpha}\geq\frac{1-\alpha}{\sqrt{1-2\alpha}\cos\psi-\alpha\sin\psi}
\end{equation*}
which is a contradiction with \eqref{Datpstar}.

(b) The intrinsic median requires a different proof. Let $L(p)=\sum_{j=1}^m\rho(p, p_j)$ where $\rho$ is the intrinsic distance; we use the Riemannian exponential map $\exp_{p*}:T_{p*}\mathcal M\rightarrow\mathcal M.$ Let $v=\log_{p^{*}}\omega\in T_{p*}\mathcal M $ and consider the geodesic curve $\gamma(t)=\exp_{p*}(tv).$  Then
\begin{align} \label{eq:intd}
dL_{p^*}(v)=\lim_{t \rightarrow 0}\frac{L(\gamma(t))-L(\gamma(0))}{t}=\lim_{t \rightarrow 0}\dfrac{L(\gamma(t))-L(p^*)}{t}\geq 0.
\end{align}
Denote
\begin{equation*}
A=\lim_{t \rightarrow 0+} \sum_{j=1}^m \left( \frac{\sqrt{\big\langle\gamma_{js}(s, t), \gamma_{js}(s, t)\big\rangle}-\sqrt{\big\langle\gamma_{js}(s, 0), \gamma_{js}(s, 0)\big\rangle}}{t} \right),
\end{equation*}
where $\gamma_j(s, t)=\exp_{\gamma(t)}(s \log_{\gamma(t)}p_j)=\exp_{\gamma(t)}(sv_j(t))$ is the geodesic curve connecting $\gamma(t)$ with $p_j,$ then $\gamma_{js}(s, t)=\frac{\partial\gamma_j(s, t)}{\partial s}.$
Set
\begin{align*}
A_j=\frac{\sqrt{\big\langle\gamma_{js}(s, t), \gamma_{js}(s, t)\big\rangle}-\sqrt{\big\langle\gamma_{js}(s, 0), \gamma_{js}(s, 0)\big\rangle}}{t},\, \, \text{for } j=1, \ldots, m.
\end{align*}
 Then
\begin{align*}
A_j=\frac{1}{t}\frac{\langle\gamma_{js}(s, t), \gamma_{js}(s, t)\rangle-\langle\gamma_{js}(s, 0), \gamma_{js}(s, 0)\rangle}{\sqrt{\langle\gamma_{js}(s, t), \gamma_{js}(s, t)\rangle} +\sqrt{\langle\gamma_{js}(s, 0), \gamma_{js}(s, 0)\rangle}}, \, \, \text{for } j=1, \ldots, m.
\end{align*}
We see that
\begin{align*}
\lim_{t\rightarrow 0^+}\left(\sqrt{\langle\gamma_{js}(s, t), \gamma_{js}(s, t)\rangle}+\sqrt{\langle\gamma_{js}(s, 0), \gamma_{js}(s, 0)\rangle}\right)=2\sqrt{\langle\gamma_{js}(s, 0), \gamma_{js}(s, 0)\rangle}.
\end{align*}
On the other hand,
\begin{multline*}
\lim_{t\rightarrow 0^+}\frac{\langle\gamma_{js}(s, t), \gamma_{js}(s, t)\rangle-\langle\gamma_{js}(s, 0), \gamma_{js}(s, 0)\rangle}{t} =2\Big\langle\frac{D}{dt}\gamma_{js}(s, 0), \gamma_{js}(s, 0)\Big\rangle\\
=2\Big\langle\frac{D}{ds}\gamma_{jt}(s, 0), \gamma_{js}(s, 0)\Big\rangle =2\frac{d}{ds}\big\langle\gamma_{jt}(s, 0), \gamma_{js}(s, 0)\big\rangle.
\end{multline*}
Thus  if $p_j\neq p^*$, one has
\begin{align*}
\lim_{t\rightarrow 0^+}A_j=\frac{\frac{d}{ds}\langle\gamma_{jt}(s, 0), \gamma_{js}(s, 0)\rangle}{\sqrt{\langle\gamma_{js}(s, 0), \gamma_{js}(s, 0)\rangle}}.
\end{align*}
Otherwise, if $p_j= p^*$, then
\begin{align*}
\lim_{t\rightarrow 0^+}A_j= \lim_{t\rightarrow 0^+} \frac{\sqrt{\langle-t\gamma'((1-s)t), -t\gamma'((1-s)t)}}{t}= \lim_{t\rightarrow 0^+} \frac{t\|v\|}{t}=\|v\|.
\end{align*}
Therefore,
\begin{align*}
dL_{p*}(v)&=\sum_{j=1}^m \int_0^1\lim_{t\rightarrow 0^+}  A_jds\\
&=\sum_{j: p_j\neq p^*} \int_0^1\dfrac{\frac{d}{ds}\langle\gamma_{jt}(s, 0), \gamma_{js}(s, 0)\rangle}{\sqrt{\langle\gamma_{js}(s, 0), \gamma_{js}(s, 0)\rangle}}ds+\|v\|\sum_{j=1}^m I(p_j=p^*)\\
&=\sum_{j: p_j\neq p^*} \dfrac{\langle\gamma_{jt}(1, 0), \gamma_{js}(1, 0)\rangle}{\|v_j\|}+\|v\|\sum_{j=1}^m I(p_j=p^*)\\
&=\sum_{j: p_j\neq p^*} \dfrac{\langle(d\exp_{p^*})_{v_j}\big(1\cdot v_j'(0)\big), (d\exp_{p^*})_{v_j}v_j\rangle}{\|v_j\|}+\|v\|\sum_{j=1}^m I(p_j=p^*)\\
&=\sum_{j: p_j\neq p^*} \dfrac{\langle v_j'(0), v_j\rangle}{\|v_j\|}+\|v\|\sum_{j=1}^m I(p_j=p^*)\\
&=-\sum_{j: p_j\neq p^*} \dfrac{\langle v, v_j\rangle}{\|v_j\|}+\|v\|\sum_{j=1}^m I(p_j=p^*),
\end{align*}
where $I(\cdot)$ is the indicator function.
Then one has,
\begin{align*}
  \dfrac{dL_{p^*} (v) }{\|v\|}&=-\sum_{j: p_j\neq p^*}  \dfrac{\langle v, v_j\rangle}{\|v\|\|v_j\|}+\sum_{j=1}^m I(p_j=p^*)\\
  &=-\sum_{j: p_j\neq p^*}\cos(\widehat{v, v_j})+\sum_{j=1}^m I(p_j=p^*).
\end{align*}
From the condition that $\log_{p^*}$ is $K$-Lipschitz continuous from $B(\omega, r)$ to $T_{p^*}\mathcal{M}$,
\begin{equation*}
  \|v_j-v\|\leq K d_g(\exp_{p^*}v_j, \exp_{p^*}v).
\end{equation*}
Then this yields
\begin{equation*}
   \dfrac{dL_{p^*}(v)}{\|v\|}<-(1-\alpha)m\sqrt{1-\frac{K^2}{C_{\alpha}^2}}+\alpha m\leq 0,
\end{equation*}
whenever $C_{\alpha}\geq K(1-\alpha)\sqrt{\frac{1}{1-2\alpha}},$ which leads to a contradiction with \eqref{eq:intd}.
\end{proof}

There are many known Riemannian manifolds with $K$-Lipschitz continuous $\log$ maps as required in part (b) of the above lemma. Below we provide a few examples including the sphere, the planar shape space and the space of positive definite matrices,  which are commonly encountered manifolds in the statistics and medical imaging literature.

\begin{prop}
Let $S^d=\{p\in\mathbb{R}^{d+1}:\|p\|=1\}$ which is the $d$-dimensional sphere.  The inverse exponential map, $\log_p$,~on $S^d$ is 2-Lipschitz continuous from $B(p, \pi/2)$ to $T_pS^d$ for all $p \in S^d$.
\end{prop}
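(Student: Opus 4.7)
The plan is to bound the operator norm of the differential of $\log_p$ pointwise on $B(p,\pi/2)$, and then integrate this bound along a minimizing geodesic between two given points in $B(p,\pi/2)$.

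First, I would compute $(d\exp_p)_v$ via Jacobi-field analysis on $S^d$, which has constant sectional curvature $1$. For $v\in T_pS^d$ with $\|v\|=r\in(0,\pi)$, $(d\exp_p)_v$ acts as the identity on the radial line $\mathbb{R}v$ and as multiplication by $\sin(r)/r$ (composed with parallel transport) on its orthogonal complement; this follows from solving $J''+r^{2}J=0$ for the perpendicular Jacobi field with $J(0)=0$. Inverting, for $q\in B(p,\pi/2)\setminus\{p\}$ with $r=d_{g}(p,q)$, the differential $(d\log_p)_q:T_qS^d\to T_pS^d$ preserves the radial direction at $q$ and scales tangential directions by $r/\sin r$. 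Its operator norm therefore equals $r/\sin r$, which is increasing on $(0,\pi/2)$ with supremum $\pi/2<2$.

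Next I would invoke the geodesic convexity of the open hemisphere $B(p,\pi/2)=\{q\in S^d:\langle p,q\rangle>0\}$. Any $q_1,q_2\in B(p,\pi/2)$ are not antipodal since $\langle p,q_1\rangle+\langle p,q_2\rangle>0$, so the minimizing geodesic $\gamma:[0,L]\to S^d$ with $L=d_{g}(q_1,q_2)<\pi$ is unique and lies in the plane $\mathrm{span}(q_1,q_2)$. Along it, $s\mapsto\langle p,\gamma(s)\rangle = A\cos s + B\sin s$ has positivity set of length $\pi$; since this set already contains the endpoints $0$ and $L<\pi$, it contains the entire interval $[0,L]$, so $\gamma([0,L])\subset B(p,\pi/2)$.

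With the geodesic confined to the hemisphere and $T_pS^d$ a linear space, I can write
\[
\log_pq_2-\log_pq_1 \;=\; \int_0^L (d\log_p)_{\gamma(s)}\bigl(\gamma'(s)\bigr)\,ds,
\]
whence
\[
\|\log_pq_2-\log_pq_1\| \;\le\; \int_0^L \|(d\log_p)_{\gamma(s)}\|_{\rm op}\,ds \;\le\; \frac{\pi}{2}\,L \;<\; 2\,d_{g}(q_1,q_2),
\]
which proves the claimed $2$-Lipschitz bound. The main technical step is the Jacobi-field identification of $(d\exp_p)_v$ on the curvature-one sphere; the geodesic convexity of the open hemisphere, while standard, is essential to justify the pointwise bound on $\|(d\log_p)_{\gamma(s)}\|_{\rm op}$ throughout the integration path. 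Note that the argument actually yields the tighter Lipschitz constant $\pi/2$, so the value $2$ in the proposition is a convenient loose upper bound.
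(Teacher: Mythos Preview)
Your proof is correct and takes a genuinely different route from the paper's. The paper writes out the explicit formula $\log_p(q)=\frac{\arccos(p^Tq)}{\sqrt{1-(p^Tq)^2}}\bigl(q-(p^Tq)p\bigr)$, expresses $\|\log_pq_1-\log_pq_2\|$ and $d_g(q_1,q_2)=\arccos(q_1^Tq_2)$ in terms of the three quantities $p^Tq_1$, $p^Tq_2$, and the angle $\varphi$ between $\log_pq_1$ and $\log_pq_2$, and then appeals to a direct trigonometric verification of the inequality $\|\log_pq_1-\log_pq_2\|\le 2\,d_g(q_1,q_2)$. Your argument instead bounds the operator norm of $(d\log_p)_q$ by $r/\sin r$ via the standard Jacobi-field description of $(d\exp_p)_v$ in constant curvature~$1$, establishes geodesic convexity of the open hemisphere, and integrates along the minimizing geodesic. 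The paper's approach is more elementary in that it avoids any Riemannian machinery beyond the explicit $\log$ formula, but it leaves the key inequality as a ``check directly'' step. Your approach is more conceptual, yields the sharper Lipschitz constant $\pi/2$, and generalizes immediately (via Rauch comparison) to any manifold with sectional curvature bounded above, whereas the paper's coordinate computation is specific to $S^d$.
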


\begin{proof} The tangent space at $p$ is given as
\begin{equation*}
T_pS^d=\{v\in\mathbb R^{d+1}: v^{T}p=0\}.
\end{equation*}
Then for $q \in  S^d$ the inverse exponential map can be expressed as
\begin{equation*}
\log_p(q)=\frac{\arccos(p^Tq)}{\sqrt{1-(p^Tq)^2}}\big(q-(p^Tq)p\big).
\end{equation*}
Hence, the distance between $\log_pq_1$ and $\log_pq_2$ is equal to
\begin{equation*}
\|\log_pq_1-\log_pq_2\|=
\sqrt{\arccos(p^Tq_1)^2+\arccos(p^Tq_2)^2-2\arccos(p^Tq_1)\arccos(p^Tq_2)\cos\varphi}
\end{equation*}
where $\varphi$ is the angle between $\log_pq_1$ and $\log_pq_2$. The geodesic distance between $q_1$ and $q_2$ is then given by
\begin{equation*}
d_g(q_1, q_2)=\arccos(q_1^Tq_2).
\end{equation*}
One can easily obtain that
\begin{equation*}
q_1^Tq_2=(p^Tq_1)(p^Tq_2)+\sqrt{1-(p^Tq_1)^2}\sqrt{1-(p^Tq_2)^2}\cos\varphi.
\end{equation*}
Then one can check directly that
\begin{equation*}
\|\log_pq_1-\log_pq_2\|\leq2d_g(q_1, q_2).
\end{equation*}
\end{proof}

The following proposition shows that the $\log$~map in  similarity-shape spaces \cite{kendall} also satisfies the $K-$ Lipschitz condition.
\begin{prop}

The similarity or planar shape space is given as
\begin{equation} \label{eq:planarid}
\Sigma_2^k=S^{2k-3}/S^1.
\end{equation}
 The inverse exponential map, $\log_p$,~on $\Sigma_2^k$ is 2-Lipschitz continuous from $B(p, \pi/4)$ to $T_p\Sigma_2^k$ for all $p \in \Sigma_2^k$.
 \end{prop}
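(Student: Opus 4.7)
The plan is to mirror the argument for the sphere by exploiting the identification $\Sigma_2^k = S^{2k-3}/S^1$, viewed as a Riemannian submersion with $S^1$ acting on $S^{2k-3}\subset\mathbb{C}^{k-1}$ by complex scalar multiplication. I would fix a preshape $\tilde p\in S^{2k-3}$ of $p$, and for each $q_i\in B(p,\pi/4)$ choose the preshape $\tilde q_i$ of $q_i$ with $\langle\tilde p,\tilde q_i\rangle_{\mathbb{C}}=\cos\theta_i\geq 0$ real, so that $\theta_i=d_\Sigma(p,q_i)<\pi/4$ and the geodesic from $\tilde p$ to $\tilde q_i$ on $S^{2k-3}$ is horizontal. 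Under the standard identification of $T_p\Sigma_2^k$ with the horizontal subspace $H_{\tilde p}=\{v\in\mathbb{C}^{k-1}:\langle\tilde p,v\rangle_{\mathbb{C}}=0\}$ (an isometry on horizontal vectors), the log map takes the same formula as on the sphere,
\[
\log_p q_i \;=\; \frac{\theta_i}{\sin\theta_i}\bigl(\tilde q_i - \cos\theta_i\,\tilde p\bigr) \;\in\; H_{\tilde p}.
\]

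Next I would expand $\|\log_p q_1-\log_p q_2\|^2$ using the complex inner product on $\mathbb{C}^{k-1}$. Setting $\lambda_i=\theta_i/\sin\theta_i$ and $\mu_i=\lambda_i\cos\theta_i$, the $\tilde p$-cross-terms reorganize via the identities $\lambda_i^2-\mu_i^2=\theta_i^2$ and $\lambda_1\lambda_2\cos\theta_1\cos\theta_2=\mu_1\mu_2$, collapsing to
\[
\|\log_p q_1 - \log_p q_2\|^2 \;=\; \theta_1^2 + \theta_2^2 - 2\theta_1\theta_2\,\mathrm{Re}\langle v_1,v_2\rangle_{\mathbb{C}},
\]
where $v_i=(\tilde q_i-\cos\theta_i\tilde p)/\sin\theta_i\in H_{\tilde p}$ are unit horizontal directions. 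Writing $\langle v_1,v_2\rangle_{\mathbb{C}}=A+iB$ with $A^2+B^2\leq 1$, the shape-space distance is
\[
d_\Sigma(q_1,q_2) \;=\; \arccos\sqrt{\bigl(\cos\theta_1\cos\theta_2+\sin\theta_1\sin\theta_2\,A\bigr)^2+\sin^2\theta_1\sin^2\theta_2\,B^2}.
\]

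The closing step is to verify $\theta_1^2+\theta_2^2-2\theta_1\theta_2\,A \leq 4\,d_\Sigma(q_1,q_2)^2$. When $B=0$, this is precisely the sphere's 2-Lipschitz assertion applied at $\tilde p$ with angle $\arccos A$ between the initial tangents, since $d_\Sigma$ then coincides with the sphere distance between $\tilde q_1,\tilde q_2$; the previous proposition closes the case. When $B\neq 0$, the LHS is unchanged but $d_\Sigma$ shrinks, so the inequality is not free: I would recover the slack from the elementary estimate $\sin(2\theta_i)\geq\theta_i$ on $(0,\pi/4]$, which is exactly why the radius $\pi/4$ appears. In the symmetric reduction $\theta_1=\theta_2=\theta$, the bound collapses cleanly to $\theta\leq\sin(2\theta)$. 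The main obstacle I anticipate is carrying out this final estimate in the asymmetric case $\theta_1\neq\theta_2$, which demands coordinated monotonicity bookkeeping in $A$ and $B$; a clean fallback is to invoke Rauch's comparison theorem via the isometry $\Sigma_2^k\cong\mathbb{CP}^{k-2}$ with Fubini--Study holomorphic sectional curvature $4$, which gives $\|d\log_p\|_{\mathrm{op}}\leq 2r/\sin(2r)\leq\pi/2<2$ on $B(p,\pi/4)$.
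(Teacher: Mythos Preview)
Your direct computation mirrors the paper's proof closely: the paper also fixes aligned preshapes, writes the same log formula, expresses $\|\log_pq_1-\log_pq_2\|$ in terms of $\theta_i$ and the angle $\varphi$ between the horizontal directions (your $A=\cos\varphi$), and writes $d_\Sigma$ using an additional angle $\psi$ that plays the role of your imaginary part $B$. Your complex-coordinate packaging is tidier, and you are more explicit than the paper about \emph{where} the radius $\pi/4$ enters, namely through $\sin(2\theta)\ge\theta$; the paper simply asserts at the end that the final inequality ``can be verified directly.'' In that sense both arguments leave the asymmetric case $\theta_1\neq\theta_2$ as an exercise---you are just more candid about it.

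Your Rauch--comparison fallback, however, is a genuinely different route that the paper does not take. Identifying $\Sigma_2^k\cong\mathbb{CP}^{k-2}$ with Fubini--Study sectional curvature in $[1,4]$ and invoking the standard Jacobi-field bound gives $\|d\log_p\|\le 2r/\sin(2r)\le\pi/2$ on $B(p,\pi/4)$, which both closes the asymmetric case cleanly and in fact yields the sharper constant $\pi/2<2$. This is more conceptual and fully rigorous; the trade-off is that it imports a comparison theorem rather than staying with the elementary trigonometric identities the paper prefers. If you want to match the paper's style, push the direct inequality through; if you want a complete proof with minimal bookkeeping, the curvature comparison is the cleaner choice.
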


   \begin{proof}
$\Sigma_2^k$ is the quotient of the sphere $S^{2k-3}$ under the following group of transformations
  \begin{equation*}
 G=\left\{
 \begin{pmatrix}
 A  & \ldots & 0 \\
    & \ddots &  \\
 0  & \ldots & A
 \end{pmatrix}\in {\rm M}(2k),
 \quad A\in {\rm SO} (2)
 \right\}\simeq S^1.
 \end{equation*}
 For any $B\in G$, we have that $B=\cos t I+ \sin t \dot{I},$ where
 \begin{equation*}
 I=
 \begin{pmatrix}
 1 & 0 & \ldots  & 0 & 0  \\
 0 & 1 & \ldots  & 0 & 0  \\
    &    & \ddots &    &     \\
 0 & 0 & \ldots  & 1 & 0  \\
 0 & 0 & \ldots  & 0 & 1
 \end{pmatrix},\quad \quad
 \dot{I}=
 \begin{pmatrix}
 0  & 1 & \ldots  & 0  & 0  \\
 -1 & 0 & \ldots  & 0  & 0  \\
     &    & \ddots &     &     \\
 0  & 0 & \ldots  & 0  & 1  \\
 0  & 0 & \ldots  & -1 & 0
 \end{pmatrix}.
 \end{equation*}
 For each $p\in\Sigma_2^k$ we define the tangent space
 \begin{equation*}
 T_p\Sigma_2^k=\{v\in\mathbb R^{2k-2}: v^Tp=0, (Ip)^Tv=0\}.
 \end{equation*}
 The inverse exponential map can be expressed as
\begin{equation*}
\log_p(q)=\frac{\arccos(p^Tq)}{\sqrt{1-(p^Tq)^2}}\big(q-(p^Tq)p\big).
\end{equation*}
Hence, the distance between $\log_pq_1$ and $\log_pq_2$ is equal to
\begin{equation*}
\|\log_pq_1-\log_pq_2\|=
\sqrt{\arccos(p^Tq_1)^2+\arccos(p^Tq_2)^2-2\arccos(p^Tq_1)\arccos(p^Tq_2)\cos\varphi},
\end{equation*}
where $\varphi$ is an angle between $\log_pq_1$ and $\log_pq_2$.  The geodesic distance between $q_1$ and $q_2$ is then given by
\begin{align*}
d_g(q_1, q_2)&=\inf_{t\in(-\pi, \pi]}\arccos(q_1^T(\cos t I+\sin t \dot{I})q_2)\\
&=\arccos\sup_{t\in(-\pi, \pi]}(\cos t q_1^Tq_2+\sin t q_1^T\dot{I}q_2)\\
&=\arccos\sqrt{(q_1^Tq_2)^2+(q_1^T\dot{I}q_2)^2}.
\end{align*}
One can easily obtain that
\begin{align*}
(q_1^Tq_2)^2+(q_1^T\dot{I}q_2)^2&=\big((p^Tq_1)(p^Tq_2)+\sqrt{1-(p^Tq_1)^2}\sqrt{1-(p^Tq_2)^2}\cos\varphi\big)^2\\
&\quad+(1-(p^Tq_1)^2)(1-(p^Tq_2)^2)(\cos\psi)^2
\end{align*}
where $\psi$ is  the angle between $\log_pq_1$ and $\dot{I}\log_pq_2$. Note that
\begin{equation*}
\pi/2-\varphi\leq\psi\leq\pi/2+\varphi.
\end{equation*}
Thus $\cos\psi\geq\cos(\pi/2-\varphi)=\sin\varphi,$ and

\begin{align*}
d_g(q_1, q_2)&\geq 2\arccos\bigg(\big((p^Tq_1)(p^Tq_2)+\sqrt{1-(p^Tq_1)^2}\sqrt{1-(p^Tq_2)^2}\cos\varphi\big)^2\\
&\quad + (1-(p^Tq_1)^2)(1-(p^Tq_2)^2)(\sin\varphi)^2\bigg)^{1/2}.
\end{align*}
Then it can be verified directly that
\begin{align*}
\|\log_pq_1-\log_pq_2\| &\leq 2\arccos\bigg(\big((p^Tq_1)(p^Tq_2)+\sqrt{1-(p^Tq_1)^2}\sqrt{1-(p^Tq_2)^2}\cos\varphi\big)^2\\
&\quad+ (1-(p^Tq_1)^2)(1-(p^Tq_2)^2)(\sin\varphi)^2\bigg)^{1/2}.
\end{align*}
Thus $\|\log_pq_1-\log_pq_2\|\leq 2 d_g(q_1, q_2).$
\end{proof}

\begin{prop}\label{prop1ps}
The manifold of positive definite $n$ by $n$ matrices ${\rm PD}(n)$
has a $1$-Lipchitz continuous inverse exponential map at any $p \in {\rm PD}(n)$.
\end{prop}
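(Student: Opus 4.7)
The plan is to exploit the fact that $\mathrm{PD}(n)$, equipped with the affine-invariant Riemannian metric $\langle X,Y\rangle_p = \operatorname{tr}(p^{-1} X p^{-1} Y)$ on $T_p\mathrm{PD}(n)$ (the space of symmetric matrices), is a Cartan--Hadamard manifold: it is simply connected, geodesically complete, and has non-positive sectional curvature. I would begin by writing down the standard global formulas
\[
\exp_p(V) = p^{1/2}\exp\bigl(p^{-1/2} V p^{-1/2}\bigr) p^{1/2},
\qquad
\log_p(q) = p^{1/2}\log\bigl(p^{-1/2} q p^{-1/2}\bigr) p^{1/2},
\]
and the corresponding geodesic distance $d_g(p,q)=\|\log(p^{-1/2} q p^{-1/2})\|_F$, together with the elementary identity $\|p^{1/2} Y p^{1/2}\|_p = \|Y\|_F$ for any symmetric $Y$. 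In particular $\log_p$ is defined globally on $\mathrm{PD}(n)$, not merely on a normal neighbourhood, so the Lipschitz claim should hold on the entire manifold.

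The key analytic step is the Rauch comparison theorem, or equivalently the CAT(0) inequality on a Hadamard manifold: if the sectional curvature is non-positive, then the exponential map at any point is distance non-decreasing, i.e.
\[
d_g(\exp_p V_1,\ \exp_p V_2) \ \ge\ \|V_1 - V_2\|_p
\qquad \text{for all } V_1, V_2 \in T_p\mathrm{PD}(n).
\]
Setting $V_i = \log_p q_i$ gives $\|\log_p q_1 - \log_p q_2\|_p \le d_g(q_1, q_2)$, which is exactly the desired $1$-Lipschitz bound, with Lipschitz constant $K=1$ valid on all of $\mathrm{PD}(n)$.

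The only substantive input is non-positivity of the sectional curvature of $\mathrm{PD}(n)$ in the affine-invariant metric, and this is the step I expect to be the main obstacle if one wanted to be self-contained; however, it is classical, since $\mathrm{PD}(n) \simeq \mathrm{GL}(n)/\mathrm{O}(n)$ is a Riemannian symmetric space of non-compact type, so I would cite it rather than rederive it via the Koszul formula. Simple connectedness is immediate because $\mathrm{PD}(n)$ is convex in the ambient space of symmetric matrices, and geodesic completeness is visible from the explicit exponential formula above being a global diffeomorphism between $T_p\mathrm{PD}(n)$ and $\mathrm{PD}(n)$. Combining these three ingredients with the Rauch/CAT(0) step yields the proposition, and in fact gives a stronger conclusion than the preceding two propositions in that the Lipschitz bound is global on $\mathrm{PD}(n)$ rather than restricted to a geodesic ball.
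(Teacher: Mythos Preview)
Your proof is correct and follows essentially the same route as the paper: set up the affine-invariant metric with its explicit $\exp$/$\log$ formulas, then invoke the fact that the exponential map is distance non-decreasing to conclude $\|\log_p q_1-\log_p q_2\|_p\le d_g(q_1,q_2)$. The only difference is the citation for that key inequality: the paper reduces to the identity by isometry and then quotes Bhatia's \emph{exponential metric increasing} property, whereas you invoke the Rauch/CAT(0) inequality for Cartan--Hadamard manifolds; these are the same statement under different names, so the arguments are equivalent.
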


\begin{proof}

We consider the Killing metric~\cite{ zbMATH05} in the manifold of invertible $n$ by $n$ matrices ${\rm GL}(n)$
\begin{equation*}
    ds^2(a)={\rm tr}(a^{-1}da)^2.
\end{equation*}
In other words, in the Lie algebra $\mathfrak{gl}(n)=T_I{\rm GL}(n)={\rm M}(n)$, we have the symmetric inner product
\begin{equation*}
    \langle A, B\rangle_{I}={\rm tr}(AB), \qquad A, B\in \mathfrak{gl}(n),
\end{equation*}
which generates the bilaterally invariant metric in the group ${\rm GL}(n)$. That is, for any $A, B\in T_g{\rm GL}(n)$ and $a \in {\rm GL}(n)$,
\begin{align*}
    \langle A, B\rangle_a=\langle a^{-1}A, a^{-1}B\rangle_I=\langle Aa^{-1}, Ba^{-1}\rangle_I={\rm tr}(a^{-1}Aa^{-1}B).
\end{align*}
Since vectors $p^{-1}A, p^{-1}B$ do not always belong to the tangent space $T_I{\rm PD}(n),$ we instead take vectors $p^{-1/2}Ap^{-1/2}$ and 
\begin{align*}
    \langle A, B\rangle_p&=\langle p^{-1/2}A, p^{-1/2}B\rangle_{p^{1/2}}\\
    &=\langle p^{-1/2}Ap^{-1/2},  p^{-1/2}Bp^{-1/2}\rangle_I\\
    &={\rm tr}(p^{-1/2}Ap^{-1}Bp^{-1/2})={\rm tr}(p^{-1}Ap^{-1}B),
\end{align*}
where $A, B\in T_p{\rm PD}(n).$ Hence we have the metric in ${\rm PD}(n)$ induced from the Killing metric in ${\rm GL}(n).$  This metric is usually known as the Fisher-Rao metric.

For this metric we have the following exponential and logarithm mappings
\begin{equation*}
\begin{split}
    \exp_pA & = p^{1/2}\exp\big(p^{-1/2}Ap^{-1/2}\big)p^{1/2},\\
    \log_pq & = p^{1/2}\log\big(p^{-1/2}qp^{-1/2}\big)p^{1/2},
\end{split}
\end{equation*}
where
\begin{align*}
\begin{split}
        \exp Y &= I+\frac{Y}{1!}+\frac{Y^2}{2!}+\ldots+\frac{Y^n}{n!}+\ldots,\\
        \log x &= (x-I)-\frac{(x-I)^2}{2}+\ldots+(-1)^{n-1}\frac{(x-I)^n}{n}+\ldots
\end{split}
\end{align*}
for any $A, Y\in {\rm Sym}(n)$ and $p, q, x\in{\rm PD}(n).$

Let $a,q_1,q_2 \in{\rm PD}(n)$. Then we have
\begin{align*}
  \|\log_a q_1-\log_a q_2\|_a & = \|\log (a^{-1/2} q_1 a^{-1/2}) -\log (a^{-1/2}q_2 a^{-1/2}) \|_I \\
                              & \le d_g(a^{-1/2} q_1 a^{-1/2},a^{-1/2}q_2 a^{-1/2}) = d_g( q_1,q_2 )
\end{align*}
where the inequality follows from the exponential metric increasing property of the Fisher-Rao metric as in \cite{bhatia2003exponential}.

\end{proof}

\section{Robust optimization on manifolds: concentration properties}

In this section, we introduce our proposed estimator, which aggregates a collection of subset optimizers of the empirical risk function.  We first divide the data set $x_1,\ldots, x_n$ into $m$ subsets $U_1,\ldots, U_m$ each of roughly size $\lfloor n/m \rfloor$. Let $\mu_{1},\ldots, \mu_{m}$ be the optimizers of the empirical risk function from each subset, $U_1,\ldots, U_m$, respectively. That is,
\begin{align} \label{eq:submeans}
\mu_{j}=\arg\min_{p\in\mathcal{M}}L_{|U_j|}(p,  U_j) \text{ for } j=1,\ldots,m
\end{align}
as in \eqref{eq:emprisk}.
Our estimator $\mu^*$ is the \emph{geometric median of the subset optimizers}, that is,
\begin{align} \label{eq:medianmeans}
\mu^*=\arg\min_{p\in M} \sum_{j=1}^m \rho(p, \mu_{j}).
\end{align}
We will show that $\mu^*$ has desired robustness properties in estimating the population parameter $\mu$.

In \cite{minsker2015} it is proven that the geometric median of a collection of weakly concentrated estimators admits a tighter deviation bound in a  Hilbert space. With the help of the Lemma 1, we generalise this result to manifolds in the following theorem.

\begin{theorem} \label{thext}
Let $\mu_{1}, \ldots, \mu_{m}$ be a collection of independent estimators of the  parameter $\mu,$
and let geometric median $\mu^{*}={\rm med}(\mu_{1}, \ldots, \mu_{m})$.
\begin{itemize}
\item[(a)]  Let $\rho$ be the extrinsic distance on $\mathcal M$ for some embedding $J: \mathcal M\rightarrow \tilde{\mathcal M}\subset \mathbb
    R^D$. Assume for any  $\omega\in \mathcal M$ the angle between   $J(\omega)-J(\mu^{*})$ and the tangent space $T_{J(\mu^{*})}\tilde{\mathcal M}$ is no bigger than $\bar{\psi}.$  For any  $\alpha\in(0, \cot\bar{\psi}\tan\frac{\bar{\psi}}{2})$ set
    \begin{equation*}
 \overbar{C}_{\alpha}=\frac{1-\alpha}{\sqrt{1-2\alpha}\cos\bar{\psi}-\alpha\sin\bar{\psi}}.
\end{equation*}
\item[(b)] Let $\rho$  be an intrinsic distance on $\mathcal M$ with respect to some Riemannian structure. Assume $\log_{\mu^{*}}$~is $K$-Lipschitz continuous from $B(\mu^{*}, \epsilon)$ to $T_{\mu^{*}}\mathcal M$. For any $\alpha\in(0, \frac{1}{2})$ set
    \begin{equation*}
\overbar{C}_{\alpha}= K(1-\alpha)\sqrt{\frac{1}{1-2\alpha}}.
\end{equation*}
\end{itemize}
Under (a) or (b), if
\begin{align} \label{eq:etainq}
P(\rho(\mu_{j}, \mu)> \epsilon)\leq \eta \text{ for } i=1,\ldots,n
\end{align}
where $\eta<\alpha$
then \begin{align} \label{eq:probgm}
P(\rho(\mu^*, \mu)>  \overbar{C}_{\alpha}\epsilon)\leq \exp(-m\phi(\alpha,\eta )),
\end{align}
where
\begin{equation*}
\phi(\alpha, \eta)=(1-\alpha)\log\frac{1-\alpha}{1-\eta}+\alpha\log\frac{\alpha}{\eta}.
\end{equation*}
\end{theorem}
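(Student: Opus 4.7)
The plan is to use Lemma~\ref{leml1} to reduce the concentration question for the geometric median to a large-deviation bound for a sum of independent Bernoulli indicators, and then to apply a Chernoff-type bound whose rate function turns out to be the binary relative entropy $\phi(\alpha,\eta)$.

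First I would translate the event $\{\rho(\mu^{*},\mu)>\overbar{C}_\alpha\epsilon\}$ into a statement about how many of the subset estimators are far from the true parameter $\mu$. Applying Lemma~\ref{leml1} pointwise, i.e., on every realization of $(\mu_{1},\ldots,\mu_{m})$, with $p^{*}=\mu^{*}$, $\omega=\mu$, and $p_j=\mu_{j}$, and observing that the uniform assumption on the angle $\bar{\psi}$ in (a), respectively the uniform Lipschitz constant $K$ in (b), forces the pointwise constant $C_\alpha$ of the lemma to be dominated by the deterministic $\overbar{C}_\alpha$ regardless of the random location of $\mu^{*}$, one concludes that on this event at least $\lceil\alpha m\rceil$ of the estimators satisfy $\rho(\mu_{j},\mu)\geq\epsilon$.

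Next, introduce the Bernoulli indicators $W_j=\mathbf{1}\{\rho(\mu_{j},\mu)>\epsilon\}$ for $j=1,\ldots,m$. These are independent because the $\mu_{j}$ are constructed from disjoint subsamples $U_j$, and by hypothesis \eqref{eq:etainq} each satisfies $P(W_j=1)\leq\eta$. The reduction above then yields
\begin{equation*}
P\bigl(\rho(\mu^{*},\mu)>\overbar{C}_\alpha\epsilon\bigr)\leq P\Bigl(\sum_{j=1}^m W_j\geq \alpha m\Bigr).
\end{equation*}
A standard Cram\'er--Chernoff bound for sums of independent Bernoulli variables with individual success probability at most $\eta$ gives, for $\alpha>\eta$, the inequality $P(\sum_j W_j\geq\alpha m)\leq\exp\bigl(-m\,\mathrm{KL}(\mathrm{Ber}(\alpha)\,\|\,\mathrm{Ber}(\eta))\bigr)$, and the binary relative entropy on the right is precisely $(1-\alpha)\log\frac{1-\alpha}{1-\eta}+\alpha\log\frac{\alpha}{\eta}=\phi(\alpha,\eta)$, delivering \eqref{eq:probgm}.

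The geometric heavy lifting has already been done in Lemma~\ref{leml1}, so the main obstacle here is really the bookkeeping needed to pass from the pointwise constant $C_\alpha$ of the lemma (which a priori depends on the random $\mu^{*}$ and on the direction from $J(\mu^{*})$ to the generic point $J(\omega)$) to a single deterministic $\overbar{C}_\alpha$ that can be pulled outside the probability; this is exactly what the uniform hypotheses on $\bar{\psi}$ and $K$ are arranged to deliver. A minor technical subtlety is the strict-vs.\ non-strict mismatch between the conclusion $\rho(\mu_{j},\mu)\geq\epsilon$ of Lemma~\ref{leml1} and the strict inequality used in defining $W_j$, which can be absorbed either by an $\epsilon\mapsto\epsilon-\delta$ limiting argument or under a mild no-atom assumption on the law of $\rho(\mu_{j},\mu)$.
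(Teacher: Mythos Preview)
Your proposal is correct and follows essentially the same route as the paper: reduce via Lemma~\ref{leml1} (using the uniform hypotheses on $\bar\psi$ or $K$ to replace the realization-dependent $C_\alpha$ by the deterministic $\overbar C_\alpha$) to a binomial tail event for the indicators $\mathbf{1}\{\rho(\mu_j,\mu)>\epsilon\}$, and then apply a Chernoff bound yielding the binary relative entropy $\phi(\alpha,\eta)$. The only cosmetic difference is that the paper handles the ``success probability $\leq\eta$'' issue by an explicit coupling with a $\mathrm{Bin}(m,\eta)$ variable before invoking Chernoff, whereas you invoke the Cram\'er--Chernoff bound directly on the heterogeneous Bernoullis; both are standard and equivalent here, and you are in fact slightly more careful than the paper in flagging the strict/non-strict $\epsilon$ mismatch.
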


\begin{proof}
Let $\psi$ be the angle between $J(\mu)-J(\mu^{*})$ and the tangent space $T_{\mu^{*}}\tilde{\mathcal M}.$ Since $\psi<\bar{\psi}$ we have $C_{\alpha}\leq\bar{C_{\alpha}}$ and $\cot\bar{\psi}\tan\frac{\bar{\psi}}{2}\leq\cot\psi\tan\frac{\psi}{2}$ where
\begin{equation*}
C_{\alpha}=\frac{1-\alpha}{\sqrt{1-2\alpha}\cos\psi-\alpha\sin\psi}.
\end{equation*}
Thus, when the event $\{\rho(\mu^{*}, \mu)>\overbar{C}_{\alpha}\epsilon\}$ occurs, the event $\{\rho(\mu^{*}, \mu)>C_{\alpha}\epsilon\}$ occurs. Then, by Lemma 1, when the event $\{\rho(\mu^{*}, \mu)>C_{\alpha}\epsilon\}$ occurs, there exists an $\alpha$ portion of elements of $\mu_{1}, \ldots, \mu_{m}$ which are at least $\epsilon$ distance away from $\mu.$ Therefore,
\begin{equation} \label{eq:gmdisbn}
P(\rho(\mu^{*}, \mu)>\bar{C_{\alpha}}\epsilon)\leq P(\rho(\mu^{*}, \mu)>C_{\alpha}\epsilon)\leq P\bigg(\sum_{j=1}^m I_{\rho(\mu_{j}, \mu)>\epsilon}>\alpha m\bigg).
\end{equation}
Let
$A=|\{j=1,...,m: \rho(\mu_{j}, \mu)>\epsilon\}|$
and let $B$ be a random variable with a binomial distribution, $B \sim b(m, \eta).$ Then with \eqref{eq:etainq} and by Lemma 23 in \cite{lerasle2011} there exists a coupling $C=(\tilde A, \tilde B)$ such that $\tilde A$ has the same distribution as $A$ and  $\tilde B$ has the same distribution as $B$ such that $\tilde A\leq\tilde B.$ Hence
\begin{equation*}
P (A>\alpha m)\leq P(B>\alpha m) \leq \exp(-m\phi(\alpha,\eta ))
\end{equation*}
where the second inequality follows from Chernoff's bound. Then with \eqref{eq:gmdisbn} we have
\begin{equation*}
P(\rho(\mu^{*}, \mu)>\bar{C_{\alpha}}\epsilon)\leq  \exp(-m\phi(\alpha,\eta )).
\end{equation*}
For the intrinsic case (b) we have a similar proof.
\end{proof}

 \begin{remark}\label{optmrmk}
One important aspect in constructing the estimator $\mu^*$ is the choice of the number of subsets $m$. By \eqref{eq:probgm}, a larger number of subset estimators, $m$, yields more robustness and a tighter concentration  around the true parameter.  At the same time, there must be enough data in each subset to ensure that each subset estimator behaves well and $\eta$ in \eqref{eq:etainq}  is sufficiently small. For a given  confidence level $\epsilon$,  one can determine the number of subsets to achieve $\eta$ in \eqref{eq:etainq} and the desired bound on the concentration or confidence level in \eqref{eq:probgm}.
\end{remark}

In the following, we provide examples, in both the intrinsic and extrinsic cases, of finding an $\eta$ in $\eqref{eq:etainq}$ which allows the computation of the bound in $\eqref{eq:probgm}$.

\begin{example}\label{example1}
Consider the embedding $J: \mathcal{M}\rightarrow \mathbb{R}^D.$  We have the induced measure $\tilde Q$  on the  image where $\tilde{Q}=Q\circ J^{-1}.$
Let $x_1, \ldots, x_n$ be an i.i.d. sample from a distribution $Q,$ such that we have the extrinsic mean $\mu$ for the random variable $x_1$
\begin{equation*}
\mu=J^{-1}\bigg(\mathcal{P}\Big(\int_{\mathbb{R}^D}u\tilde{Q}(du)\Big)\bigg).
\end{equation*}
Divide the sample $x_1, \ldots, x_n$ into $m$ disjoint groups $U_1, \ldots, U_m$ of size $[n/m]$
each, and define
\begin{equation*}
\begin{split}
     \tilde{\mu}_j=&\frac{1}{|U_j|}\sum_{i\in U_j}J(x_i) \quad j=1, ..., m,\\
      \mu_j&\in J^{-1}\big(\mathcal{P}(\tilde{\mu}_j)\big).
 \end{split}
\end{equation*}
One can easily conclude that
\begin{align*}
    \rho(\mu, \mu_j)&=\|J(\mu)-J(\mu_j)\|\\
   & =\|J(\mu)-\tilde{\mu}_j+\tilde{\mu}_j-J(\mu_j)\|\\
   &\leq\|J(\mu)-\tilde{\mu}_j\|+\|\tilde{\mu}_j-J(\mu_j)\|\\
   &\leq2\|J(\mu)-\tilde{\mu}_j\|.
\end{align*}
Therefore
\begin{align*}
    \mathbb{E}\rho^2(\mu, \mu_j)&\leq4\mathbb{E}\|J(\mu)-\tilde{\mu}_j\|^2\\
    &=\frac{4}{|U_j|^2}\sum_{i\in U_j}\mathbb{E}\|J(\mu)-J(x_i)\|^2\\
    &\leq\frac{4}{|U_j|^2}\sum_{i\in U_j}\mathbb{E}\rho^2(\mu, x_i)\\
    &=\frac{4}{|U_j|}\mathbb{E}\rho^2(\mu, x_1)\leq4\left[\frac{m}{n}\right]\mathbb{E}\rho^2(\mu, x_1).
\end{align*}
So by Chebyshev's inequality
\begin{equation}\label{Cheb1}
    P\big(\rho(\mu_j, \mu)\geq\epsilon\big)=P\big(\rho^2(\mu_j, \mu)\geq\epsilon^2\big) \leq\frac{1}{\epsilon^2}\mathbb{E}\rho^2(\mu_j, \mu)\leq\frac{4}{\epsilon^2}\left[\frac{m}{n}\right]\mathbb{E}\rho^2(\mu, x_1).
\end{equation}
Finally, we have the collection of independent estimators $\mu_{1}, \ldots, \mu_{m},$ such that
\begin{align*}
P(\rho(\mu_{j}, \mu)> \epsilon)\leq \eta,
\end{align*}
where $\eta=\frac{4}{\epsilon^2}\left[\frac{m}{n}\right]\mathbb{E}\rho^2(\mu, x_1).$ So by theorem \ref{thext} for any $\alpha\in(0, \cot\bar{\psi}\tan\frac{\bar{\psi}}{2})$
 \begin{align*}
P(\rho(\mu^*, \mu)>  \overbar{C}_{\alpha}\epsilon)\leq \exp(-m\phi(\alpha,\eta )),
\end{align*}
where
\begin{align*}
\mu^{*}&={\rm med}(\mu_{1}, \ldots, \mu_{m}),\\
 \overbar{C}_{\alpha}&=\frac{1-\alpha}{\sqrt{1-2\alpha}\cos\bar{\psi}-\alpha\sin\bar{\psi}},\\
\phi(\alpha, \eta)&=(1-\alpha)\log\frac{1-\alpha}{1-\eta}+\alpha\log\frac{\alpha}{\eta}.
\end{align*}

\end{example}

\begin{example}\label{example2}
Let $x_1, \ldots, x_n$ be an i.i.d. sample from a distribution $Q,$ such that we have the Fr\'echet mean $\mu$ for the random variable $x_1.$

Divide the sample $x_1, \ldots, x_n$ into $m$ disjoint groups  $U_1, \ldots, U_m$ each of size $[n/m]$, and define
\begin{equation*}
\begin{split}
     \mu_j=\arg\min_{y\in\mathcal{M}}\frac{1}{|U_j|}
     \sum_{i\in U_j}d_{g}^2(y, x_i), \quad j=1, ..., m.\\
  \end{split}
\end{equation*}
Considering the $j$th subsample corresponding to $U_j$ on the tangent space at $\mu_j$,
\begin{equation*}
    \log_{\mu_j}\mu_j=\frac{1}{|U_j|}
     \sum_{x_i\in U_j}\log_{\mu_j}x_i=0. 
\end{equation*}
Thus on the tangent space $T_{\mu_j}\mathcal{M}$, we can obtain the equality
\begin{equation*}
    d_g^2(\mu, \mu_j)=\|\log_{\mu_j}\mu\|^2=\frac{1}{|U_j|^2}\bigg\|
     \sum_{x_i\in U_j}(\log_{\mu_j}x_i-\log_{\mu_j}\mu)\bigg\|^2.
\end{equation*}
Thus,
\begin{align*}
    \mathbb{E}d_g^2(\mu, \mu_j)&=\frac{1}{|U_j|^2}\sum_{x_i\in U_j}\mathbb{E}\|\log_{\mu_j}x_i-\log_{\mu_j}\mu\|^2\\
    &\leq\frac{K^2}{|U_j|^2}\sum_{i\in U_j}\mathbb{E}d_g^2(\mu, x_i)=\frac{K^2}{|U_j|}\mathbb{E}d_g^2(\mu, x_1)\leq K^2\left[\frac{m}{n}\right]\mathbb{E}d_g^2(\mu, x_1).
\end{align*}
So by Chebyshev's inequality,
\begin{equation}\label{Cheb}
    P\big(d_g(\mu_j, \mu)\geq\epsilon\big)=P\big(d_g^2(\mu_j, \mu)\geq\epsilon^2\big) \leq\frac{1}{\epsilon^2}\mathbb{E}d_g^2(\mu_j, \mu)\leq\frac{K^2}{\epsilon^2}\left[\frac{m}{n}\right]\mathbb{E}d_g^2(\mu, x_1).
\end{equation}
Finally, we have the collection of independent estimators $\mu_{1}, \ldots, \mu_{m},$ such that
\begin{align*}
P(d_g(\mu_{j}, \mu)> \epsilon)\leq \eta,
\end{align*}
where $\eta=K^2\left[\frac{m}{n}\right]\mathbb{E}d_g^2(\mu, x_1).$ So by theorem \ref{thext} for any $\alpha\in (0, \frac{1}{2})$
 \begin{align*}
P(\rho(\mu^*, \mu)>  \overbar{C}_{\alpha}\epsilon)\leq \exp(-m\phi(\alpha,\eta )),
\end{align*}
where
\begin{align*}
\mu^{*}&={\rm med}(\mu_{1}, \ldots, \mu_{m}),\\
 \overbar{C}_{\alpha}&= K(1-\alpha)\sqrt{\frac{1}{1-2\alpha}},\\
\phi(\alpha, \eta)&=(1-\alpha)\log\frac{1-\alpha}{1-\eta}+\alpha\log\frac{\alpha}{\eta}.
\end{align*}

\end{example}

\section{Simulations and Applications} \label{sec:simapps}

In this section, through  extensive numerical examples, we show robustness and improved concentration about the population parameter of the geometric median of subset estimators in agreement with theorem \ref{thext}.
We first consider some simulated examples in estimating population means in $S^d$ and $PD(3)$. We then formulate a robust procedure for estimating explanatory directions for dimension reduction in $PD(3)$ and do a simulation study using this procedure. Finally, we apply the median-of-means method in the shape space to a hand shape data set as in \cite{FletcherVJ08}.

Numerical results from both simulated and real data analysis in this section agree with the robustness and concentration properties of the estimator.
We see in these results that
\begin{enumerate}
\item In simulations~\ref{simt1},~\ref{simt2},~\ref{simt3}, and~\ref{simt4}, and with various numbers of outliers, the average distance of the median-of-means is always an improvement over the average distances of the subset means.
\item The average distance of the median-of-means is almost always an improvement over the overall mean in the presence of outliers.
\item In the case of $PD(3)$, in Simulation~\ref{simt4}, the average distance of the median-of-means for $m=5,10, 15$ often gives an improvement over the overall median ($m=60$) in the presence of outliers. Number of groups $m=15$ seems to provide the best concentration overall. That the effect in more pronounced seems to agree with the log map in $PD(3)$ being 1-Lipschitz as in proposition~\ref{prop1ps} and with the bound given in theorem~\ref{thext} with $K=1$.
\end{enumerate}
In simulation~\ref{simt5}, the median-of-means estimator is applied in estimating both the center of operations and explanatory directions for dimension reduction. The robustness property is shown as explanatory submanifolds maintain their fit to data in terms of intrinsic sum-of-squared residuals in the presence of outliers better than the ordinary PGA procedure. All code and data used in this section can be found in ~\url{https://github.com/DrewLazar/RobustManifold}.

\subsection{Simulation Study on \texorpdfstring{$S^d$}{Sd}} \label{sec:simSd}
In this subsection, we provide examples with data simulated from the von Mises-Fisher distribution on the sphere. We consider the estimation of both intrinsic and extrinsic means in the presence of various numbers of outliers. As shown by the numerical comparisons below, the estimator obtained from the robust estimation procedure shows improved concentration over subset-based estimators and often is closer to  the true parameter of interest compared to the overall sample mean and overall sample median. We first describe algorithms used for computing various summary statistics related to our estimators in $S^d$.
\subsubsection{Computation of Sample Statistics on $S^d$} \label{sec:samplestatsinSd}
Given $\{ p_1,\ldots,p_n \} \subset S^d$  we compute sample statistics as follows:
\begin{enumerate}
\item
\textbf{Intrinsic mean.} With objective function $L_n(x)=\frac{1}{n} \sum_{i=1}^n \arccos^2(\langle x,p_i \rangle)$ and constraint function $g(x)=\langle x,x \rangle$ let
\[ \gamma_i(x) =\frac{\arccos(\langle {x},{p_i} \rangle )}{\sqrt{1-\langle {x},{p_i} \rangle^2}}. \]
Then the sample mean $\hat{\mu}$ satisfies Lagrange multiplier condition
\[
\sum_{i=1}^n \gamma_i(\hat{\mu}) p_i = \lambda \hat{\mu} \text{ with } \langle \hat{\mu}, \hat{\mu} \rangle =1  \text{ and } \lambda =  \sum_{i=1}^n \gamma_i(\hat{\mu}) \langle p_i, \hat{\mu} \rangle.
\]
As in \cite{MR2264946}, letting $\Psi (x) = \sum\nolimits_{i=1}^n \gamma_i(x) p_i$, we use the fixed-point algorithm
\begin{align*}
\mu_k & \mapsto \mu_{k+1} \\
\mu_{k+1} &  = \frac{\Psi(\mu_k)}{\norm{\Psi(\hat{\mu}_k))}}.
\end{align*}
Then $\mu_k \rightarrow \hat{\mu}$.
\item \textbf{Intrinsic median}. We use a generalization of Ostresh's modification of Weiszfeld's algorithm as introduced in \cite{FletcherVJ08}.
Let
\[ \Psi (x) = \sum_i \frac{\Log[x]{p_i}}{\arccos (\emet{x}{p_i})}  \left(\sum_i \frac{1}{\arccos (\emet{x}{p_i})}\right)^{-1}
\]
\vspace{-10pt}
\begin{align*}
m_k & \mapsto m_{k+1} \\
m_{k+1} &  = \Exp[x_k]{\Psi(m_k)}.
\end{align*}
Then $m_k \rightarrow \hat{m}$ where $\hat{m}$ is the intrinsic sample median.

\item \textbf{Extrinsic mean.} As in \cite{rabibook}, the extrinsic sample mean is the projection of the sample mean under the embedding. That is,
\[
\hat{\mu} =\mathcal{P} \left(\frac{1}{n}\sum_{i=1}^{n}J(p_i)\right)
\]
where $J$ is our embedding map. In the case of $S^d$, where $J$ is the identity map and projection is done by normalizing in $\mathbb{R}^{d+1}$, $\mu=\bar{x}/\norm{\bar{x}}$ where $\bar{x}$ is the Euclidean sample mean.

\item \textbf{Extrinsic median.} Let
\[
L_n(x)= \frac{1}{n} \sum_{i=1}^n \lvert \lvert x-p_i \rvert  \rvert \text { for }  x \in \mathbb{R}^{d+1} \text{ and } g(x)=L_n|_{S^d}.
\]
With $S^d$ as a submanifold of $\mathbb{R}^{d+1}$, for $p \in S^d$ the gradient of $g$ is the orthogonal projection of $\nabla_p L_n$ onto $T_p S^d$, that is,

    \[
    \nabla_p g = \text{proj}_{T_p S^d }(\nabla_p L_n).
    \]
We take $\nabla_p L_n$ as in Weiszfeld's algorithm \cite{MR2506479} and compute the sample geometric median $\hat m$ by gradient descent as follows:

\[
\Psi(x) = \sum_i \frac{p_i - x}{\norm{p_i - x}}\left(\sum_i \frac{1}{\norm{p_i - x}}\right)^{-1}
\]
\vspace{-5pt}
\begin{align*}
m_k & \mapsto m_{k+1} \\
m_{k+1} &  = \Exp[m_k]{\text{proj}_{T_{m_k} S^d}(\Psi(m_k))}
\end{align*}
Then $m_k \rightarrow \hat{m}$, the extrinsic sample median.
\end{enumerate}

\subsubsection{Simulations in $S^d$}
\label{sec:SimulationsSd}
We consider the von Mises-Fisher distribution on the unit sphere.
Distributions on the sphere, and the estimations of their intrinsic means have important applications in directional statistics, as in~\cite{VMDir}, and cluster analysis, as in~\cite{banerjee2005clustering}.  A von
Mises-Fisher distribution on $S^d$ has pdf
\begin{equation*}
    f_d(x; \mu, \kappa)=\frac{\kappa^{d/2-1}}{(2\pi)^{d/2}I_{d/2-1}(\kappa)}e^{\kappa\langle\mu, x\rangle},
\end{equation*}
where $I_n$ is the modified Bessel function of the first kind
\begin{equation*}
    I_n(\kappa)=\frac{2^{-n}\kappa^n}{\Gamma(n+1/2)\Gamma(1/2)}\int_0^{\pi}e^{\kappa\cos\theta}\sin^{2n}\theta d\theta.
\end{equation*}
The intrinsic mean of the distribution is $\mu$ and $\kappa$ is a concentration parameter about $\mu$ with a larger $\kappa$ giving increased concentration.  One has
\begin{equation*}
   f_d(x; \mu, \kappa)= \frac{\Gamma\big((d-1)/2\big)}{2\pi^{(d-1)/2}\int_0^{\pi}e^{\kappa\cos\theta}\sin^{d-2}\theta d\theta}e^{\kappa\langle\mu_i, x\rangle}.
\end{equation*}
Thus, sampling $x$ from the von Mises-Fisher distribution,
\begin{align} \label{von-mise}
    P\big(d_g(x, \mu)\leq\epsilon\big)&=\frac{\Gamma\big((d-1)/2\big)}{2\pi^{(d-1)/2}\int_0^{\pi}e^{\kappa\cos\theta}\sin^{d-2}\theta d\theta}\Big(\frac{2\pi^{(d-1)/2}}{\Gamma((d-1)/2)}\Big)\int_0^{\epsilon}e^{\kappa\cos\theta}\sin^{d-2}\theta d\theta \nonumber \\
     &=\frac{\int_0^{\epsilon}e^{\kappa\cos\theta}\sin^{d-2}\theta d\theta}{\int_0^{\pi}e^{\kappa\cos\theta}\sin^{d-2}\theta d\theta}.
\end{align}


 \begin{simt}{Estimating Intrinsic Mean in $S^2$:}\label{simt1}  Using \cite{MatlabVM} we sample $n=60$ data points from the von Mises-Fisher distribution on $S^2$. We take $\kappa=30$, which by \eqref{von-mise} guarantees with probability $\approx 1$ that the sample is within a hemisphere and thus the intrinsic mean and median uniquely exist.

We include $k= 0, 5, 10,\text{ and } 15$ outliers outside a symmetric $95\%$ confidence region about the mean with the confidence region computed using \eqref{von-mise}. We then apply the median-of-means technique of section 3 for $m=1, 5, 15, 30$ and $60$ groups.  Over 1000 runs, we compute
 \begin{enumerate}
  \item the average intrinsic distance $\overline{\rho(\mu^*,\mu)}$ from the true mean $\mu$ to the geometric median of subsets estimator $\mu^*$.
 \item the average intrinsic distance $\overline{\overline{\rho(\mu_i,\mu)}}$ from $\mu$ to the average of the subset means $\mu_i, \, i=1,\ldots,m$.
 \end{enumerate}

\begin{table}[h!]
\centering
\subfloat{
\begin{tabular}{{|c||c|cc|cc|cc|cc| }} \hline
k & $ \overline{\rho(\hat{\mu},\mu)}$ & $\overline{\rho(\mu^*,\mu)}$ & $\overline{\overline{\rho(\mu_i,\mu)}}$ & $\overline{\rho(\mu^*,\mu)}$ & $\overline{\overline{\rho(\mu_i,\mu)}}$ \\ \hline
0 & 0.0597 & 0.0583&0.0947 & 0.0514&0.1496 \\
5 & 0.0647  & 0.0615&0.1159 & 0.0531&0.1652  \\
10 & 0.1194  & 0.1116&0.1414 &0.1018&0.2113 \\
15 & 0.1819  & 0.1731&0.1973 & 0.1631&0.2419 \\ \hline \hline
&  sample mean (m=1) & \multicolumn{2}{c|}{m=5}  & \multicolumn{2}{c|}{m=15} \\ \hline
\end{tabular}}
\\
\subfloat{
\begin{tabular}{{|c||cc|cc| }} \hline
k & $\overline{\rho(\mu^*,\mu)}$ & $\overline{\overline{\rho(\mu_i,\mu)}}$  & $\overline{\rho(\hat{m},\mu)}$ & $\overline{\overline{\rho(\mu_i,\mu)}}$ \\ \hline
0 & 0.0455 &0.2118 & 0.0424&0.2829 \\
5 & 0.0453&0.2350 &0.0447&0.2959 \\
10 & 0.0776&0.2501 &0.0614&0.3259\\
15 & 0.1383&0.2954 &0.0925&0.3738 \\ \hline \hline
 & \multicolumn{2}{c|}{m=30} & \multicolumn{2}{c|}{sample median (m=60)} \\ \hline
\end{tabular}}
\caption{Results from Simulation 1 showing performance for various estimators of the mean under a von Mises-Fisher distribution in $S^2$, with $k$ the number of outliers and $\rho$ intrinsic distance.}
\end{table}
\textit{Note that when $m=1$, $\mu_i$ and $\mu^*$ are both the sample  Fr\'echet mean of the whole data set,  which we denote as $\hat{\mu}$}. Also, when $m=60, \mu^*$ is the sample median and $\mu_i=p_i$ for $i=1, \ldots, 60$. The same situation holds in simulations~\ref{simt2},~\ref{simt3} and~\ref{simt4}.

In Figure \ref{VM5out} we have a sample of $n=60$ from the von Mises-Fisher Distribution including 5 added outliers. We take $m=5$ subsets and we see the improved concentration about the population mean of the geometric median of the 5 subset means.
 \begin{figure}[ht!]
 \centering
  \includegraphics[scale=.3]{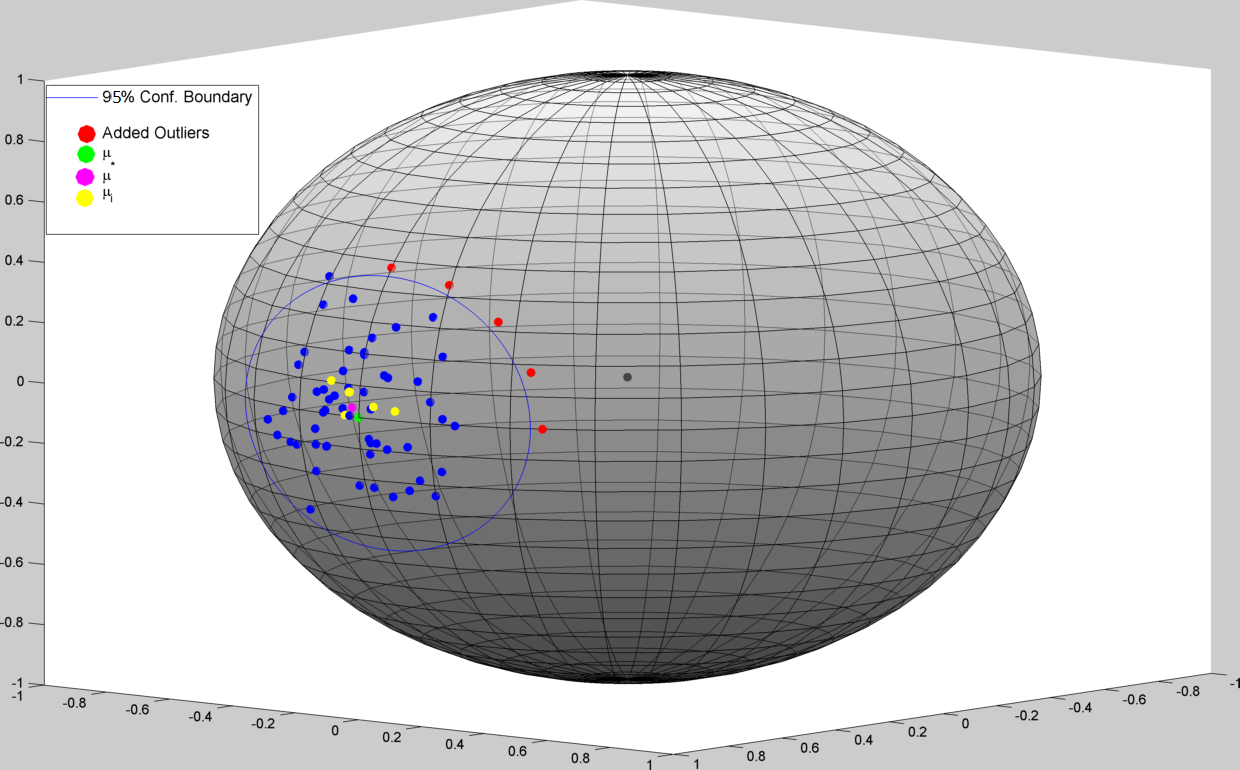}
  \caption{Von-Mises Fisher, $\kappa=30$, 5 added outliers}
  \label{VM5out}
\end{figure}
\end{simt}

\begin{simt}{Approximation of the Intrinsic Mean in $S^7$:} \label{simt2} We repeat the first part of the experiment in Simulation 1 in $S^7$ except with $n=200, \kappa=20, k=0,10,20,40$ outliers and  $m=1,10,50,100,200$ groups.
\begin{table}[h!]
\centering
\subfloat{
\begin{tabular}{{|c||c|cc|cc|cc|cc| }} \hline
k & $ \overline{\rho(\hat{\mu},\mu)}$ & $\overline{\rho(\mu^*,\mu)}$ & $\overline{\overline{\rho(\mu_i,\mu)}}$ & $\overline{\rho(\mu^*,\mu)}$ & $\overline{\overline{\rho(\mu_i,\mu)}}$ \\ \hline
0 & 0.0396  & 0.0399&0.1186 & 0.0384&0.2570 \\
10 & 0.0565 & 0.0541&0.1258 & 0.0514&0.2669  \\
20 & 0.0897  & 0.0900&0.1462 &0.0834&0.2827 \\
40 & 0.1656 & 0.1678&0.2082 & 0.1596&0.3376 \\ \hline \hline
& {Sample mean (m=1)} & \multicolumn{2}{c|}{m=10}  & \multicolumn{2}{c|}{m=50} \\ \hline
\end{tabular}}
\\
\subfloat{
\begin{tabular}{{|c||cc|cc| }} \hline
k & $\overline{\rho(\mu^*,\mu)}$ & $\overline{\overline{\rho(\mu_i,\mu)}}$  & $\overline{\rho(\hat{m},\mu)}$ & $\overline{\overline{\rho(\mu_i,\mu)}}$ \\ \hline
0 & 0.0398 &0.3590 & 0.0387&0.4896 \\
10 & 0.0469&0.3676 &0.0457&0.4978 \\
20 & 0.0760&0.3896 &0.0682&0.5301\\
40 & 0.1513&0.5176 &0.1305 &0.5987 \\ \hline \hline
 & \multicolumn{2}{c|}{m=100} & \multicolumn{2}{c|}{sample median (m=200)} \\ \hline
\end{tabular}}
\caption{Results from Simulation 2 showing performance for various estimators of the mean under a von Mises-Fisher distribution in $S^7$, with $k$ the number of outliers and $\rho$ intrinsic distance.}
\end{table}
\end{simt}
\begin{simt}{Approximation of the Extrinsic Mean in $S^2$:}\label{simt3} We repeat the experiment in Simulation 1, but with $\rho$ as the extrinsic distance and with each average taken over 1200 runs.
\begin{table}[h!]
\centering
\subfloat{
\begin{tabular}{{|c||c|cc|cc|cc|cc| }} \hline
k & $ \overline{\rho(\hat{\mu},\mu)}$ & $\overline{\rho(\mu^*,\mu)}$ & $\overline{\overline{\rho(\mu_i,\mu)}}$ & $\overline{\rho(\mu^*,\mu)}$ & $\overline{\overline{\rho(\mu_i,\mu)}}$ \\ \hline
0 & 0.0272  & 0.0330 &0.0676 & 0.0312 &0.1179 \\
5 & 0.0621  & 0.0634 &0.0943 & 0.0541&0.1512  \\
10 & 0.1231 & 0.1190 &0.1456 &0.1083&0.1952 \\
15 & 0.1771 & 0.1688 &0.1956 & 0.1632&0.2337 \\ \hline \hline
& \multicolumn{1}{c|}{Sample  mean (m=1)} & \multicolumn{2}{c|}{m=5}  & \multicolumn{2}{c|}{m=15} \\ \hline
\end{tabular}}
\\
\subfloat{
\begin{tabular}{{|c||cc|cc| }} \hline
k & $\overline{\rho(\mu^*,\mu)}$ & $\overline{\overline{\rho(\mu_i,\mu)}}$  & $\overline{\rho(\hat{m},\mu)}$ & $\overline{\overline{\rho(\mu_i,\mu)}}$ \\ \hline
0 & 0.0305 &0.1681 & 0.0312&0.2312 \\
5 & 0.0453&0.2034 &0.0411&0.2745 \\
10 & 0.0847&0.2479 &0.0612&0.3241\\
15 & 0.1453&0.2971 &0.0837&0.3728 \\ \hline \hline
 & \multicolumn{2}{c|}{m=30} & \multicolumn{2}{c|}{sample median (m=60)} \\ \hline
\end{tabular}}
\caption{Results from Simulation 3 showing performance for various estimators of the mean under a von Mises-Fisher distribution in $S^7$, with $k$ the number of outliers and $\rho$ intrinsic distance.}
\end{table}
\end{simt}
The results in Tables 1-3, showing the performance of the various estimators in Simulations 1-3 respectively, demonstrates that the median-of-means estimator almost always improves over the average of subset means and overall Fr\'echet sample mean estimators in the presence of outliers.
\subsection{Simulation study on \texorpdfstring{$PD(3)$}{PD(3)}     }
In this subsection, we consider simulated data from a generalized log-normal distribution on the space of $3\times3$ positive definite matrices, $PD(3)$. As in subsection~\ref{sec:simSd}, we consider the estimation of intrinsic means in the presence of various numbers of outliers.
There are multiple applications in which it is of interest to estimate the mean of a sample of positive definite matrices.  This includes
principal geodesic analysis (PGA), as in~\cite{FletcherJoshi}, where optimization to find explanatory directions is done in the tangent space at the sample mean.  Using our median-of-means procedure, we formulate a robust PCA procedure (RPGA).  We first describe algorithms used for
computing various summary statistics related to our estimators in $PD(3)$.

\subsubsection{Computation of Sample Statistics on $PD(3)$}
To compute the sample intrinsic mean in the following simulation, we use the damped gradient descent algorithm as in \cite{FletcherJoshi}. As shown in \cite{MR0442975}, as $PD(3)$ is of non-negative curvature, the intrinsic mean is guaranteed to exist and to be unique. To compute the sample intrinsic median, we use the generalization of Weiszfeld's algorithm given in \cite{FletcherVJ08} where the sample intrinsic median is shown to exist and to be unique. Computations of projection to subspaces and of principal geodesic directions are done using MATLAB minimization routines and user-supplied gradients as
formulated in~\cite{PGAgrad:Sommer} with the derivative of the matrix exponential map provided by~\cite[Theorem~4.5]{najfeld1995derivatives}.

\subsubsection{Robust Principal Geodesic Analysis (RPGA)}\label{sec:RPGA}

Principal Geodesic Analysis (PGA) as in~\cite{lazar2017scale} is a two-step procedure which involves 1) computing a center of the data and 2) successively finding orthogonal tangent vectors at that center so that their exponentiated span best fits the data according to intrinsic sum-of-squared residuals.

We propose a Robust PGA procedure (RPGA) which 1) uses the median-of-means estimate as the center of the data and 2) finds orthogonal directions in the tangent space using the robust median-of-means Principal Component Analysis (PCA) procedure given in~\cite{minsker2015}. Specifically, in RPGA
\begin{enumerate}
\item Divide the data into $m$ subsets $U_1,\ldots,U_m$ and for each compute an intrinsic mean $\mu_j$ as in~\eqref{eq:submeans}
and then compute $\mu^*=\text{med}(\mu_1,\ldots,\mu_m)$ as in~\eqref{eq:medianmeans}.
\item Compute $V_i =\vecc({\Log[\mu^*]{U_i}})$ where $\Log[\mu^*]{U_i}$ is the image of $U_i$ under the Riemmanian log map. As in~\cite{minsker2015}, compute sample covariance matrices $\Sigma_i$ for each $V_i$ and then compute
\[ \hat{\Sigma} = \text{med}(\Sigma_1,\ldots,\Sigma_n)
\]
where the median is taken with respect to Frobenius norm $||A||_F=\text{trace}(A^\intercal A).$ We take the eigenvectors of $\hat{\Sigma}$, $\{w_1,\ldots,w_6\}$, arranged in order by largest to smallest eigenvalue. Then our robust principal geodesic directions in the tangent space at $\mu^*$ are $\{v_1,\ldots,v_6\}$ where $v_i$ is the vector corresponding to $w_i$ by the vec operator. To form explanatory subspaces we then exponentiate the span of $\{v_1,\ldots,v_k\}$ at $\mu^*$ for $k=1,\dots,6$.

\end{enumerate}

This procedure is robust as it ensures both the located center of the data and the located explanatory directions are not unduly affected by the presence of outliers.

\subsubsection{Simulations in $PD(3)$}\label{sec:sims}
\begin{simt}{Estimating the Intrinsic Mean in $PD(3)$:} \label{simt4} We sample $n=60$ data points from a log-normal distribution where if the random variable $X$ has this distribution then $\text{vec(Log}_I(X))\sim\mathcal{N}(\mathbf{0},\kappa\mathbf{I})$ with $\kappa$ a scaling parameter. We repeat the experiment of Simulation 1 of section~\ref{sec:SimulationsSd} with each average taken over 1200 runs.
\begin{table}[h!]
\centering
\subfloat{
\begin{tabular}{{|c||c|cc|cc|}} \hline
k & $ \overline{\rho(\hat{\mu},\mu)}$ & $\overline{\rho(\mu^*,\mu)}$ & $\overline{\overline{\rho(\mu_i,\mu)}}$ & $\overline{\rho(\mu^*,\mu)}$ & $\overline{\overline{\rho(\mu_i,\mu)}}$ \\ \hline
0 & 0.2630  & 0.2781&0.5909 & 0.2753&1.0408 \\
5 & 0.2640  & 0.2512&0.5776 & 0.2683&1.0745  \\
10 & 0.3568 & 0.3179&2.7485 &0.2986&1.3158 \\
15 & 0.5292 & 0.3001&1.0433 & 0.3437&1.4246 \\ \hline \hline
& {Sample  mean (m=1) } & \multicolumn{2}{c|}{m=5}  & \multicolumn{2}{c|}{m=15} \\ \hline
\end{tabular}}
\\
\subfloat{
\begin{tabular}{{|c||cc|cc| }} \hline
k & $\overline{\rho(\mu^*,\mu)}$ & $\overline{\overline{\rho(\mu_i,\mu)}}$  & $\overline{\rho(\hat{m},\mu)}$ & $\overline{\overline{\rho(\mu_i,\mu)}}$ \\ \hline
0 & 0.2750 &1.5230 & 0.2728&2.3449 \\
5 & 0.2724&1.5930 &0.2675&2.4139 \\
10 & 0.3306&1.7607 &0.3482&2.5002\\
15 & 0.4183&1.8107 &0.5265&2.5617 \\ \hline \hline
 & \multicolumn{2}{c|}{m=30} & \multicolumn{2}{c|}{Sample median (m=60)} \\ \hline
\end{tabular}}
\caption{Results for Simulation 4 with data simulated from a log-normal distribution in $PD(3)$, $k$ the number of outliers, and $\rho$ the intrinsic distance.}
\end{table}
\end{simt}

The results are shown in Table 4.
Again, in this example, the median-of-mean estimator always improves over the average of the means and almost always over the overall sample Fr\'echet mean. The average distance from the truth of the median-of-means for $m = 5,10,15$ is an improvement over the overall median ($m = 60$) in the presence of outliers. The number of groups $m = 15$ seems to provide the best concentration overall.

\begin{simt}{Estimating Explanatory Directions in $PD(3)$ with RPGA:}\label{simt5}
We sample from a log-normal distribution,  where if the random variable $X$ has this distribution then $\text{vec(Log}_I(X))\sim\mathcal{N}(\mathbf{0},\kappa\mathbf{\Sigma})$ with $\kappa$ a scaling parameter. $\Sigma$ is diagonal with diagonal entries which vary from 1 to 20 to ensure that population PGA directions exist.

Over 200 runs, we add $0, 5, 10, 15$ outliers outside a $95\%$ confidence region in $n=60$ data points and compute PGA and RPGA explanatory directions. We then find the intrinsic mean sum of squared residuals (mSSRs) of the data without outliers relative to the estimated explanatory submanifolds. Table 5 gives the average of the mSSRs over 200 runs for submanifolds of 1, 2, and 3 dimensions for PGA and for RPGA computed with 5, 10 and 15 groups.

We see that without outliers, the PGA procedure, which sequentially optimizes a fit to the data at the intrinsic mean, produces the lowest average mSSR, regardless of the number of groups for RPGA. However, as outliers are added, the mSSR for PGA increases to a greater extent than
RPGA.  Note that RPGA with $m=1$ groups is the linear approximation of the PGA procedure given in~\cite{FletcherJoshi}
\begin{table}[h!]
\centering
\subfloat{
\begin{tabular}{{|c||c|c|c|c| }} \hline
k & \text{PGA} &  \text{RPGA} & \text{RPGA}  & \text{RPGA} \\ \hline
0 & 0.4206 & 0.4265 & 0.4259&0.4320 \\
5 & 0.4529 & 0.4465 & 0.4314&0.4342   \\
10 & 0.4541 & 0.4438 & 0.4508&0.4374 \\
15 & 0.4540 & 0.4445 & 0.4492&0.4442 \\
20 & 0.4527 & 0.4473 & 0.4507&0.4496 \\ \hline \hline
\multicolumn{2}{|c|}{$m$ groups}  & m=5  & m=10 & m=15  \\ \hline
\end{tabular}}
\hspace{10pt}
\subfloat{
\begin{tabular}{{|c||c|c|c|c| }} \hline
k & \text{PGA} &  \text{RPGA} & \text{RPGA}  & \text{RPGA} \\ \hline
0 & 0.2629 & 0.2686 & 0.2691&0.2751 \\
5 & 0.2924 & 0.2870 & 0.2803&0.2795  \\
10 & 0.2963  & 0.2838 & 0.2925&0.2791 \\
15 & 0.2994 & 0.2835 & 0.2758&0.2850 \\
20 & 0.3041 & 0.2841 & 0.2889 &0.2775 \\ \hline \hline
\multicolumn{2}{|c|}{$m$ groups}  & m=5  & m=10 & m=15  \\ \hline
\end{tabular}}
\\
\subfloat{
\begin{tabular}{{|c||c|c|c|c| }} \hline
k & \text{PGA} &  \text{RPGA} & \text{RPGA}  & \text{RPGA} \\ \hline
0 & 0.1472 & 0.1497 & 0.1533 &0.1608 \\
5 & 0.1919 & 0.1801 & 0.1600&0.1588  \\
10 & 0.2242 & 0.2102 & 0.1940&0.1743 \\
15 & 0.2208 & 0.2149 & 0.2134&0.2079 \\
20 & 0.2305 & 0.2259 & 0.2169&0.2206 \\ \hline \hline
\multicolumn{2}{|c|}{$m$ groups}  & m=5  & m=10 & m=15  \\ \hline
\end{tabular}}
\caption{Average mSSRs to explanatory submanifolds computed with $k$ outliers to data without outliers in $PD(3)$}
\end{table}
\end{simt}

\subsection{Hand Shape Data in \texorpdfstring{$\Sigma^K_2$}{SigmaK2}}
We consider the hand shape data set in \cite{CootesHand} of 18 hands with each hand in planar shape space $\Sigma^{72}_2$.  A planar shape $\Sigma_2^K$ consists of objects with $K$ landmarks in $\mathbb R^2$ modulo the Euclidean motions including rotation, scaling and translation~\cite{rabibook, kendall}.  As in \cite{FletcherVJ08}, we use ellipses as outliers with each one as
\[
\{ (a \cos (k\pi/36), b \sin(k\pi/36);  k = 0, \ldots, 71 \}
\]
 where $a, b$ are sampled from the uniform distribution on [0.5,1].
 With $k=3$ added outliers, we divide the data of size $n=21$ into $m=7$ random subsets, each of size 3. We then compute and observe the geometric median and the sample mean.

\subsubsection{Computation of Sample Statistics on $\Sigma^K_2$}
We identify $\Sigma^{72}_2$ with $S^{69}/S^1$ as in \eqref{eq:planarid}, and compute intrinsic sample means and medians using direct modifications of the algorithms in section \ref{sec:samplestatsinSd}.

In Figure~\ref{fig:sub1} (a) we show $n=21$ hands with 3 outliers. In (b) we show 7 randomly assigned subsets indicated with seven different colors, and in (c) we show the subset means of each group. In (d) we see less influence of the outliers in the geometric median, as it retains the shape of a hand similar to the original 18 hands. 
\begin{figure}[!ht]
   \centering
   \subfloat[Hand Shape Data with 3 outliers]{\includegraphics[width=.49\textwidth, trim = {12cm 1.0cm 11cm 1.0cm}, clip]{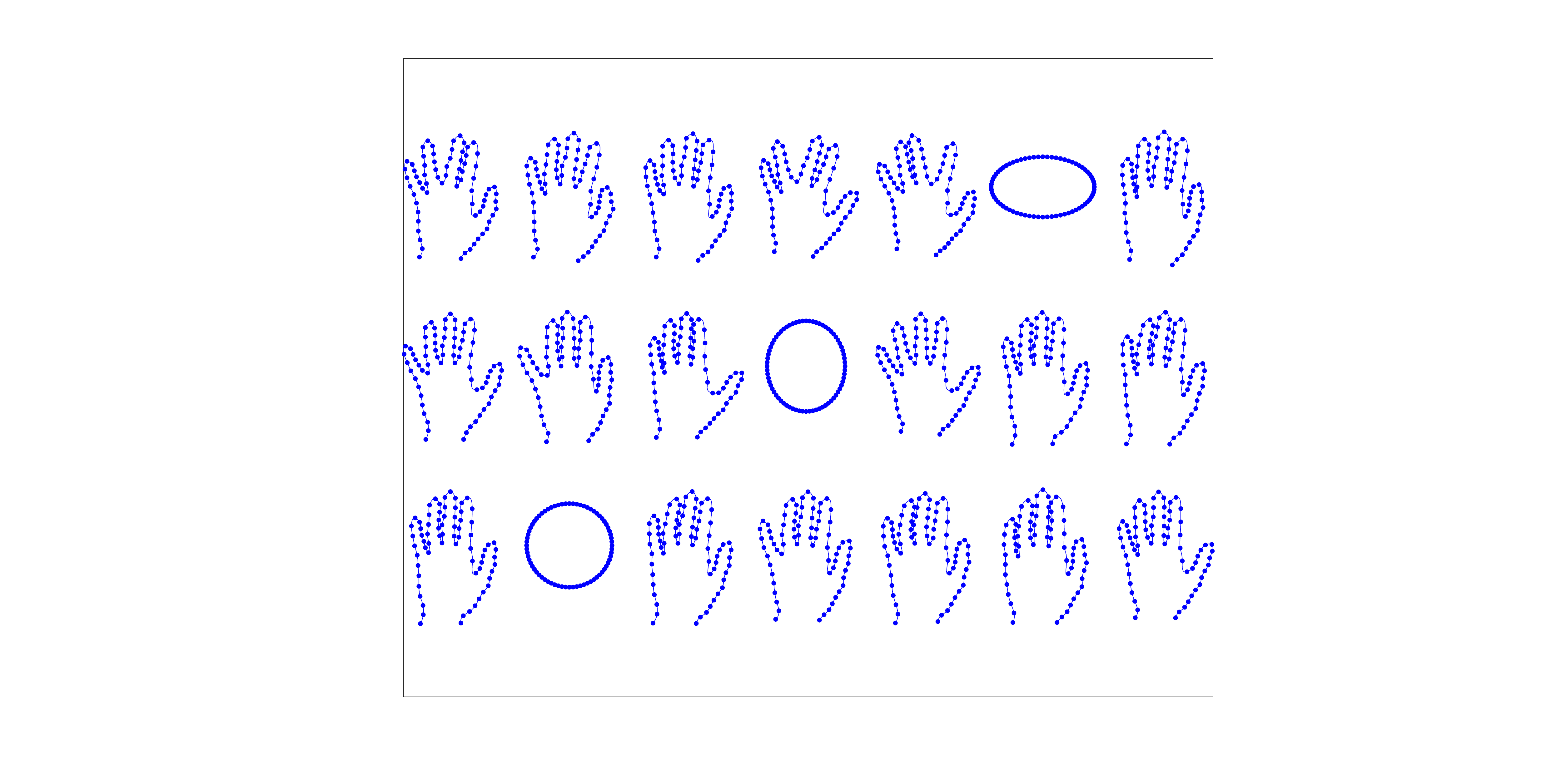}}
   \subfloat[$m=7$ subsets]{\includegraphics[width=.49\textwidth, trim = {12cm 1.0cm 11cm 1.0cm}, clip]{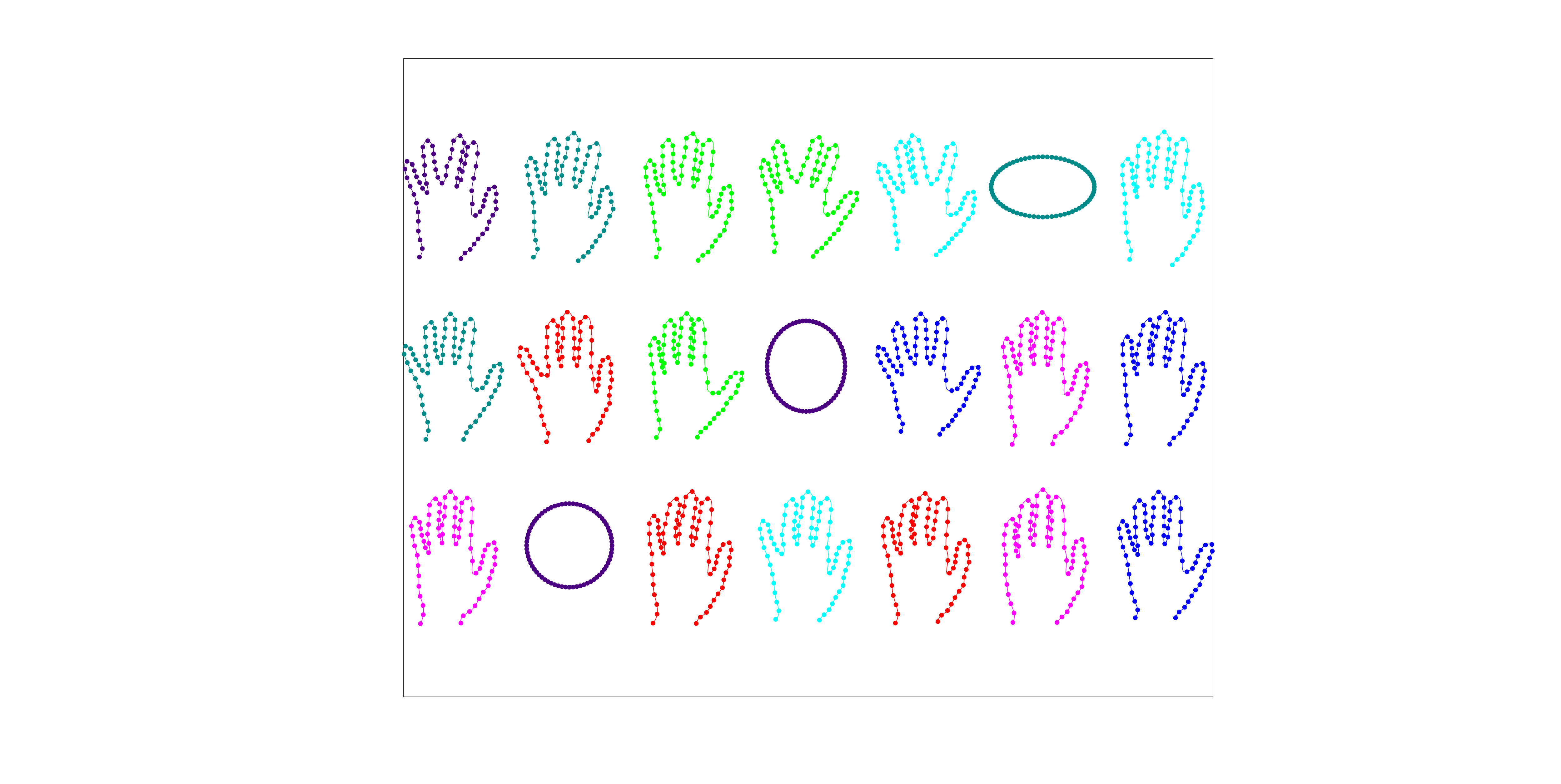}} \\
      \subfloat[Subset means, $\mu_i$]{\includegraphics[width=.49\textwidth, trim = {12cm 1.0cm 11cm 1.0cm}, clip]{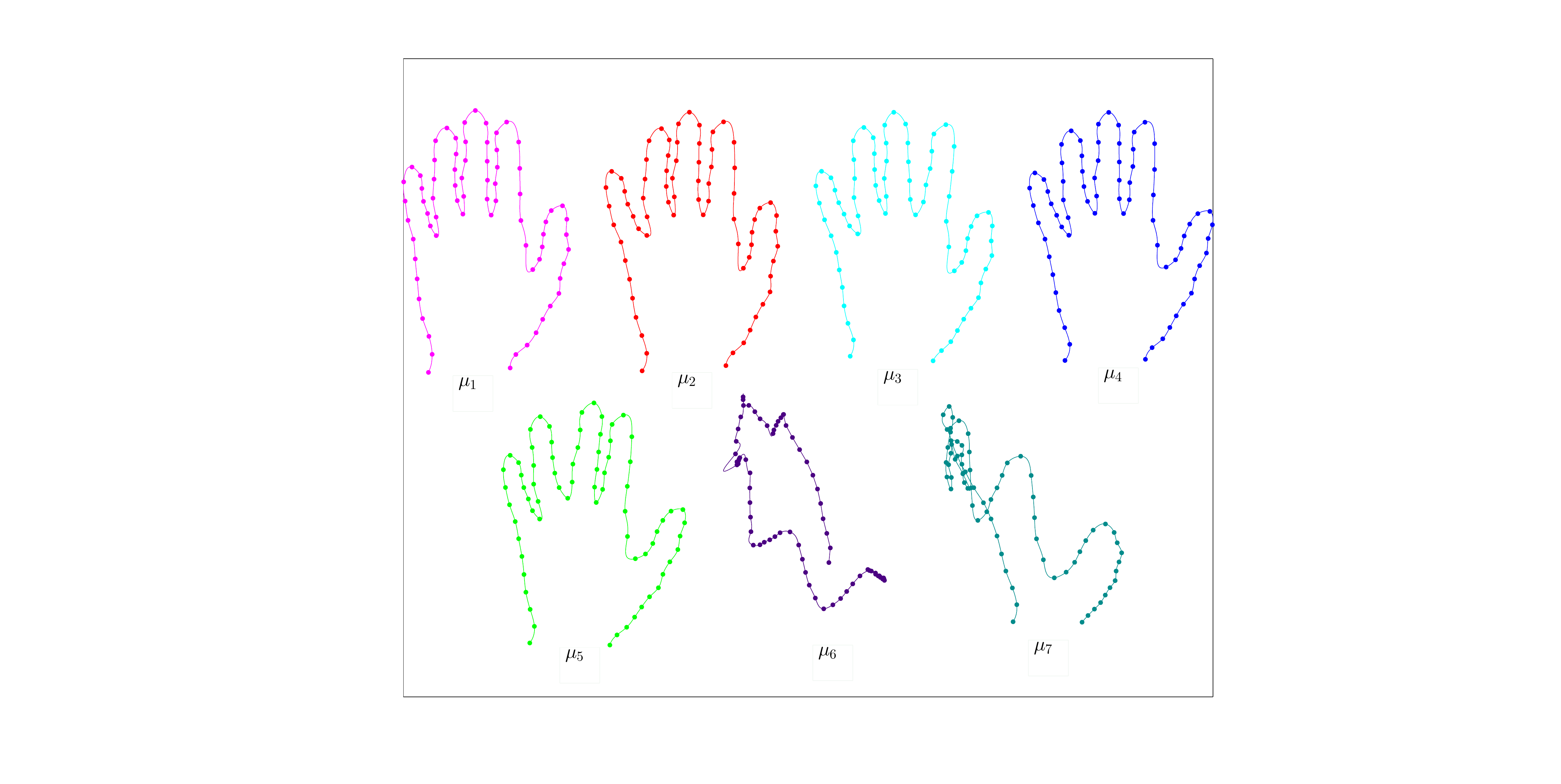}}
   \subfloat[Sample mean, $\hat{\mu}$ and geometric median, $\mu^*$ ]{\includegraphics[width=.49\textwidth, trim = {12cm 1.0cm 11cm 1.0cm}, clip]{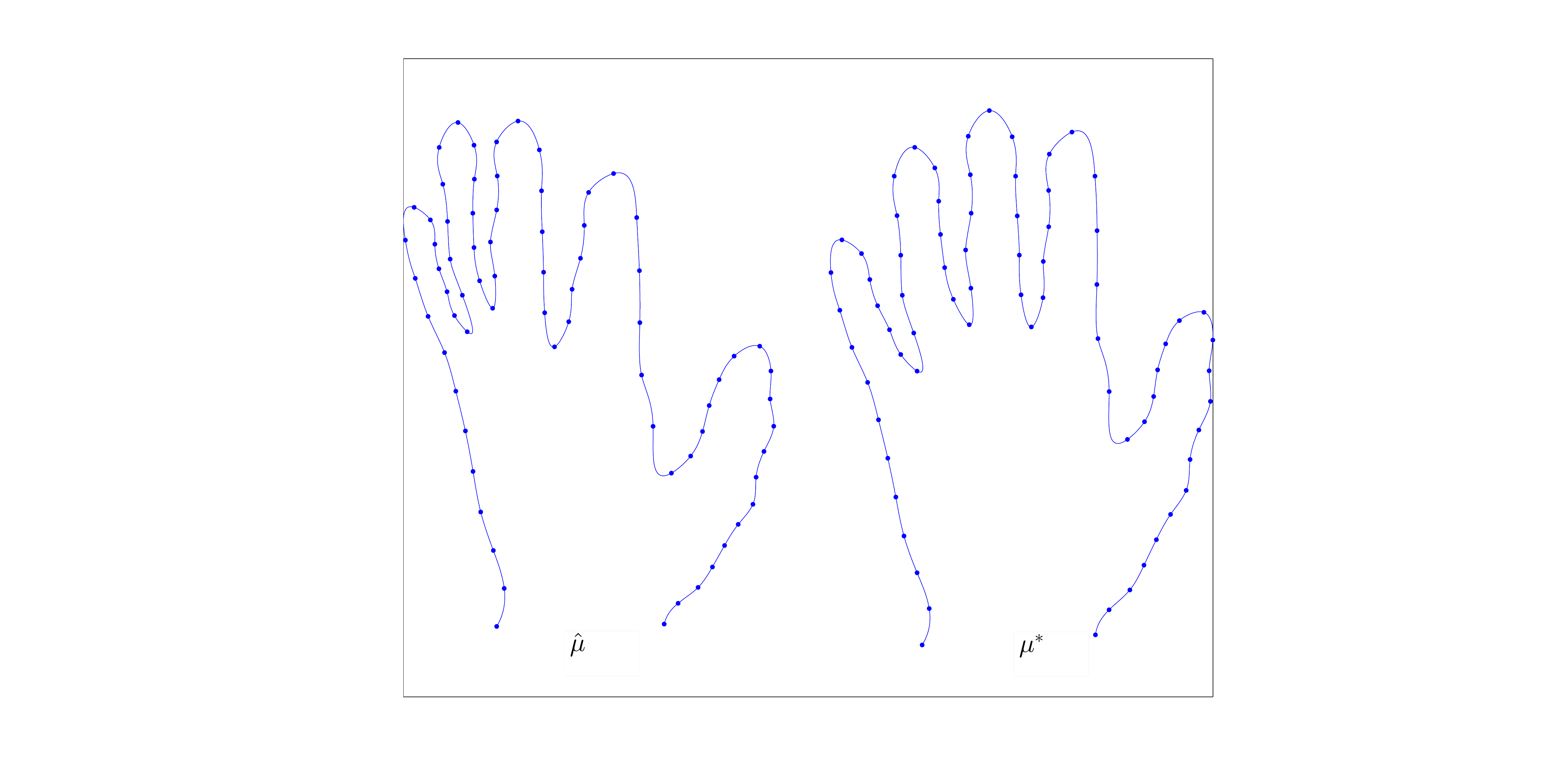}}
   \caption{Median-of-Means on Hand Shape Data}
   \label{fig:sub1}
\end{figure}


\section{Discussion}

We propose a robust and scalable procedure for general optimization problems on manifolds. Scalability is of particular importance in dealing with the difficult computational issues that arise in estimating sample statistics for manifold data or extracting low-dimensional manifold in high-dimensional data.  Along these lines, parallel computation can be implemented trivially from the subsampling procedure.

It is shown through lemma~\ref{leml1}, which provides an important property of geometric medians on manifolds, and the following theorem~\ref{thext}, that the resulting estimator yields provable robustness and tighter concentration bounds about the true parameter of interest. Numerical results from both simulated and real data analysis in Section~\ref{sec:simapps} agree with the robustness and concentration properties of the estimator.

Future research might include considering the optimal numbers and sizes of subgroups for estimation as discussed in remark~\ref{optmrmk}. In theorem~\ref{thext}, for a given $\epsilon$, more groups provide a larger $m$ but also a larger $\eta$ in the bound provided by~\ref{eq:probgm}. This is reflected in Examples~\ref{example1} and~\ref{example2} in~\eqref{Cheb1} and~\eqref{Cheb}. In finding the optimal $m$,  the number of outliers and amount of contamination in the data also must be factored in. In addition, challenging computational considerations for large data sets on manifolds and advantages in partitioning data needs to be considered. Also, the second step of the RPGA procedure in~\ref{sec:RPGA} might be done instead by partitioning the data in the manifold rather than their Riemannian logs in the tangent space at $\mu^{*}$. Computation of RPGA, as formulated in~\ref{sec:RPGA}, only requires the computation of the median-of-means $\mu^*$, and then the linear operation of computing sample covariance matrices of the Riemannian logs of data in the tangent space at $\mu^*$. Robust estimation on manifolds in other contexts such as manifold regression~\cite{aswani2011regression} might also be considered. As in the case of estimation of the mean, additional machinery and complications arise in the more general context of a manifold.

\section*{Acknowledgements}

Lizhen Lin would like to thank Dong Quan Nguyen for very useful discussions. The contribution of  LL and BS was funded by NSF grants IIS 1663870,  DMS CAREER 1654579 and a DARPA grant N66001-17-1-4041.

\bibliographystyle{plain}

\bibliography{reference-LL}

\end{document}